\Crefname{figure}{Fig.}{Figs.}
\newcolumntype{P}[1]{>{\centering\arraybackslash}p{#1}}
\newtheorem{definition}{Definition}
\newtheorem{remark}{Remark}
\newtheorem{lemma}{Lemma}
\def\BibTeX{{\rm B\kern-.05em{\sc i\kern-.025em b}\kern-.08em
        T\kern-.1667em\lower.7ex\hbox{E}\kern-.125emX}}
\begin{document}

\title{FedPot: A Quality-Aware Collaborative and Incentivized  Honeypot-Based Detector for Smart Grid Networks}

\author{Abdullatif Albaseer,~\IEEEmembership{Member,~IEEE,}
        Nima Abdi, 
        Mohamed Abdallah,~\IEEEmembership{Senior Member,~IEEE,} Marwa Qaraqe  ~\IEEEmembership{Senior Member,~IEEE,} Saif Al-Kuwari,~\IEEEmembership{Senior Member,~IEEE}% <-this % stops a space
\thanks{Abdullatif Albaseer,Nima Abdi, Mohamed Abdallah, Marwa Qaraqe, and  Saif Alkuwari are with Division of Information and Computing Technology, College of Science and Engineering, Hamad Bin Khalifa University, Doha, Qatar (e-mail:\{aalbaseer, niab52126, moabdallah, mqaraqe, smalkuwari\}@hbku.edu.qa).}% <-this % stops a space
\thanks{* Preliminary results in this work are presented at the IEEE CCNC Conference, 2023~\cite{10060393}}
}

% make the title area
\maketitle

\begin{abstract}
Honeypot technologies provide an effective defense strategy for the Industrial Internet of Things (IIoT), particularly {in} enhancing the Advanced Metering Infrastructure's (AMI) security by bolstering the network intrusion detection system. For this security paradigm to be fully realized, it necessitates the active participation of small-scale power suppliers (SPSs) in implementing honeypots and engaging in collaborative data sharing with traditional power retailers (TPRs). To motivate this interaction, TPRs incentivize data sharing with tangible rewards. However, without access to {an} SPS's confidential data, it is daunting for TPRs to validate shared data, {thereby} risking SPSs' privacy and increasing sharing costs due to voluminous honeypot logs. These challenges can be resolved by utilizing Federated Learning (FL), a distributed machine learning (ML) technique {that} allows for model training without data relocation. However, the conventional FL algorithm lacks the requisite functionality for {both} the security defense model and the rewards system of the AMI network. This work presents two solutions: {first,} an enhanced {and} cost-efficient FedAvg algorithm incorporating a novel data quality measure, and {second,} FedPot, the development of an effective security model with {a} fair incentives mechanism under {an} FL architecture. Accordingly, SPSs are limited to sharing the ML model they learn after efficiently measuring their local data quality, whereas TPRs can verify the participants' uploaded models and fairly compensate each participant for their contributions through rewards. Moreover, the proposed scheme addresses the problem of harmful participants who share subpar models while claiming high-quality data through a two-step verification approach. {Simulation} results, drawn from realistic {mircorgrid} network log datasets, demonstrate that the proposed solutions outperform state-of-the-art techniques by enhancing the security model and guaranteeing fair reward distributions.
\end{abstract}

% Note that keywords are not normally used for peerreview papers.
\begin{IEEEkeywords}
AMI, Honeypot-Based Detector, Security Model, Machine Learning, Incentive Mechanism, Collaborative Learning
\end{IEEEkeywords}

% For peer review papers, you can put extra information on the cover
% page as needed:
% \ifCLASSOPTIONpeerreview
% \begin{center} \bfseries EDICS Category: 3-BBND \end{center}
% \fi
%
% For peerreview papers, this IEEEtran command inserts a page break and
% creates the second title. It will be ignored for other modes.
\IEEEpeerreviewmaketitle

{
\printnomenclature
\nomenclature{SG}{Smart Grid}
\nomenclature{IIoT}{Industrial Internet of Things}
\nomenclature{SPSs}{Small-Scale Power Suppliers}
\nomenclature{TPRs}{Traditional power retailers}
\nomenclature{AMI}{Advanced Metering Infrastructure}
\nomenclature{DoS}{Denial-of-Service}
\nomenclature{DDoS}{Distributed Denial-of-Service}
\nomenclature{NIDS}{Network Intrusion Detection System}
\nomenclature{ML}{Machine Learning}
\nomenclature{DL}{Deep Learning}
\nomenclature{GDPR}{General Data Protection Regulation}
\nomenclature{FL}{Federated Learning}
\nomenclature{VDD}{Valid Defense Data}
\nomenclature{CPU}{Central Processing Unit}
\nomenclature{IR}{Individual Rationality}
\nomenclature{IC}{Incentive Compatibility}
\nomenclature{LDIC}{Local Downward Incentive Compatibility}
\nomenclature{LUIC}{Local Upward Incentive Compatibility}
\nomenclature{PLCs}{Programmable Logic Controllers}
\nomenclature{RTUs}{Remote Terminal Units}
\nomenclature{TPRate}{True Positive Rate}
\nomenclature{TNR}{True Negative Rate}
\nomenclature{MITM}{Man-in-the-Middle}
\nomenclature{IID}{Independent and Identically Distributed}
\nomenclature{$M$}{number of connected SPSs}
\nomenclature{$\theta_m$}{SPS type: $\theta_1\le \ldots \le \theta_m \le \ldots \le \theta_M$}
\nomenclature{$\boldsymbol{V}_m$}{local model of SPS $m$}
\nomenclature{${\Pi}$}{Incentive set for all types}
\nomenclature{$\pi_m$}{the model and its rewards of type $m$}
\nomenclature{$\boldsymbol{V}(z)$}{global model at $z$-\textit{th} training round}
\nomenclature{$R_m$}{the reward given to SPS $m$}
\nomenclature{$\boldsymbol{V}_m(j)$}{the local model update at $j$ local iteration}
\nomenclature{$F_m(\boldsymbol{V}_{m}(j))$}{the local loss function at $j$ local iteration}
\nomenclature{$F_m(\boldsymbol{V}_{m})$}{the updated local loss function}
\nomenclature{${F_{i}(\mathbf{\boldsymbol{V}_m}, x_{i}, y_{i}})$}{loss function on sample $i$}
\nomenclature{$E$}{number of local training epochs}
\nomenclature{$\eta$}{the learning rate}
\nomenclature{$\mathcal{S}_z$}{the selection set at round $z$}
\nomenclature{$S^m_z$}{the selection binary variable}
\nomenclature{$\xi$}{the defense model size}
\nomenclature{$C_d$}{the honeypot deploying cost}
\nomenclature{$C_u$}{the uploading cost of the local defense model}
\nomenclature{$C_t$}{the local training cost to update the defense model}
\nomenclature{$C_m$}{the local deployment, training, and uploading costs}
\nomenclature{$T^{max}$}{the maximum latency in each round}
\nomenclature{$U_m$}{the utility of SPS $m$}
\nomenclature{$U_{TPR}$}{the utility of TPR (i.e., utility company)}
\nomenclature{$G(V_{m})$}{the revenue given by each SPS $m$}
\nomenclature{$\Lambda(x_{i}, \delta)$}{an open ball space with a radius of $\delta$ centered at $x_i$}
\nomenclature{$\varphi(\mathcal{D}_m)$}{the estimation of the local VDD quality}}

\section{Introduction}
% The very first letter is a 2 line initial drop letter followed
% by the rest of the first word in caps.
% 
% form to use if the first word consists of a single letter:
% \IEEEPARstart{A}{demo} file is ....
% 
% form to use if you need the single drop letter followed by
% normal text (unknown if ever used by the IEEE):
% \IEEEPARstart{A}{}demo file is ....
% 
% Some journals put the first two words in caps:
% \IEEEPARstart{T}{his demo} file is ....
% 
% Here we have the typical use of a "T" for an initial drop letter
% and "HIS" in caps to complete the first word.

{ \IEEEPARstart{A} Smart Grid (SG) is an advanced electricity system that leverages digital tech, IIoT, and networking to boost efficiency, reliability, and sustainability. It achieves this through a cyber-physical system for the bidirectional flow of power and information, automating supply-demand balance with real-time data \cite{sg1}. SG allows individuals to become small-scale power suppliers (SPSs) using renewable sources, reducing the burden on traditional power retailers (TPRs) and fostering advanced metering infrastructure (AMI) integration \cite{3,7725484}.}

However, integrating different network technologies to interconnect SPSs and TPRs, which lack a proper defense system, is raising concerns about the security and privacy of SG systems.  Adversaries have access to numerous vulnerabilities, including ways to launch destructive attacks and access sensitive information such as a homeowner's residential address, social security number, daily habits, and any information related to unauthorized electricity consumption or disruption in the network. {As an example of such attacks, the 2015 Ukrainian power outage illustrates the critical vulnerabilities to cyber-attacks in both the control center and the smart devices employed for managing and observing the electrical system \cite{7752958}.} 

In addition, SG is subject to denial-of-service (DoS) attacks, which can flood the network with traffic{,} causing delays in data transmission and processing. This can lead to disruptions in the system's normal operation and potentially cause critical elements of the energy system to fail~\cite{sg3}.
To protect the communication infrastructure, {implementing} a Network Intrusion Detection System (NIDS) is crucial. NIDS serves as a robust shield, detecting numerous threats that extend beyond the capacity of traditional firewalls. {The} protective capability of NIDS has been notably enhanced by the advent of Machine Learning (ML) and Deep Learning (DL)-based approaches, pushing their performance to a significantly higher level. This interconnected relationship between NIDS, ML, and DL demonstrates a synergistic blend of network security and advanced computational methodologies~\cite{10399957}.

Building upon this security paradigm,  {incorporating}   honeypot technology further enriches the protective capabilities of ML/DL-based NIDSs. Honeypots empower these systems to meticulously map the attack surface, discern patterns, and thwart malicious actions by delivering detailed insights into potential intruder behavior{.}
Within the realm of SG, a honeypot mimics the regular operations of a meter with the intent to deceive, misdirect, and analyze the activities of potential intruders. Employing such tactics allows SPSs to develop independent, streamlined security measures. Furthermore, it facilitates the exchange of defensive information with TPRs, {eliminating} the need for TPRs to purchase costly security models from security retailers.
This approach amplifies the protective layers of AMI, creating a more robust defense mechanism. Simultaneously, it alleviates the financial strain associated with defense strategies for TPRs, presenting a dual advantage {by} strengthening network security while managing costs effectively~\cite{9}.

Significant effort in the literature has been devoted to developing incentive mechanisms and designing contracts to encourage SPSs to implement honeypots and share protective information with TPRs while maintaining a balance such that SPSs do not reap excessive benefits beyond their due. For example, in ~\cite{13,15}, the authors proposed information asymmetry-based contract theory approaches considering different communication systems. Concentrating on the SG network, the work in~\cite{9397770} introduced a recent motivation-driven approach where TPR motivates SPSs to deploy honeypots and exchange defensive information to enhance the security framework. This approach relies on a range of essential prerequisites, including the dimensions of the shared data and the related expenses.
However, all the prior work may violate the SPSs’ privacy
while increasing the cost of sharing the obtained raw data. 
The recent enforcement of stringent data privacy regulations, such as the General Data Protection Regulation (GDPR) \cite{dataprivacyref}, further underscores the importance of maintaining data privacy.
Additionally, the large volume of honeypot logs can lead to excessive transmission costs and network congestion. It is also worth noting that the shared honeypot logs may not always yield benefits, as they could contain redundant information that does not enhance the existing protection strategy.
Given the aforementioned concerns, federated learning (FL)~\cite{9415623} has become an effective distributed ML/DL method to maintain privacy and minimize communication costs by exclusively sharing the ML model while retaining the raw data in its original repository. This seamlessly connects the need for privacy preservation and cost-efficiency with the advantages of shared data for network security.
Current research on FL typically assumes that entities are willing to participate in the FL training process and use the collected data to improve the shared model~\cite{incentive1,incentive3}. {In reality, entities may be reluctant to join without properly designed incentives (i.e., contracts)} because FL consumes significant resources (i.e., computation and communication costs). Furthermore, entities in FL are autonomous actors who decide when and how to interact. When dealing with different reward strategies from different alliances, participants may use different training techniques, influencing the performance of the resulting models. To that end, it is critical to develop an efficient {reward} mechanism to encourage entities to participate in FL while maintaining the required level of data quality. 
Researchers have recently presented several incentive schemes aimed at compensating the involved parties (i.e., SPSs) using financial incentives according to the magnitude of their data contributions \cite{incentive12, incentive15, incentive18}. Utilizing current FL methodologies, all participants acquire the same model upon the end of the training phase, regardless of the produced data quality (i.e., honeypot logs) or the impact of the submitted local models. Consequently, certain participants possessing large datasets may provide inadequate contributions yet attain a higher {proficiency level than} others who possess high-quality data. This scenario gives rise to a challenging issue called the FL free-rider problem.

\subsection{Contribution}
Considering the previously mentioned remarks, {we introduce FedPot, an FL framework initially designed for SG networks but validated across diverse scenarios, including IEC 104 and N-BaIoT datasets. This architecture incorporates refined, efficient, and resilient aggregation and averaging techniques complemented by a fair rewards system based on data quality. While our primary case study is centered on SGs, {our approach of combining honeypot logs and FL has shown adaptability and effectiveness across multiple domains.}} 
We introduce novel schemes for local data quality, participant selection, and global model averaging, where the TPR resolves a convex optimization problem that prioritizes data quality over data size. Each SPS fine-tunes the global model received from the TPR using its honeypot logs and transmits back the model parameters. After that, the TPR combines and enhances the defensive model employing the proposed approaches, as detailed later. To overcome the FL free-rider problem within AMI networks, we propose a novel metric to measure local data quality and contributions instead of data size, which may contain redundant information that does not improve the security defense model. We devised a two-step verification process to tackle malicious or poorly performing participants. The TPR verifies submitted models, then updates the global model and allocates incentives based on contribution.
In summary, the primary contributions of this work are:
\begin{itemize}
    \item Develop an effective architecture for privacy-preserving honeypot-based detectors, FedPot, that protects user privacy while considering data quality, an efficient global model, and fair incentives. The proposed solution handles and ensures a reliable FL training process based on valid defense data acquired by implementing honeypots on the SPS side.
    \item Formulate the problem as an optimization problem, then provide solutions incorporating (i) problem reformulation and transformation, (ii) the prior quality determination of the local data through novel metrics, and (iii) two schemes for reliable global averaging and contribution-based reward distributions. 
    \item  {In response to the challenge of adversarial perturbations in model uploads, we introduce a universally applicable two-step verification system. This robust approach is designed to ensure the integrity of model contributions.}
    \item  {Carry out comprehensive simulations using realistic log data from various datasets (i.e., N-BaIoT, IEC 104, and IEC MMS datasets). The results affirm that our proposed framework outperforms existing state-of-the-art techniques in multiple application domains.}
\end{itemize}
\subsection{Organization}
The remainder of the paper is structured as follows. 
Section \ref{Sec:relatedwork} briefly reviews the relevant literature on NIDSs and honeypot deployment. 
The system models, including the learning, cost, and reward, are presented in \Cref{sec:system_model}. The problem is formulated in \Cref{sec:problemformulation}, and the proposed solution, including the problem reformulation and transformation, is presented in \Cref{sec:proposedsol}. In \Cref{sec:eval}, we discuss the experimental setup and present the numerical results. Finally, \Cref{sec:conclusion} concludes this paper and suggests possible directions for further research.

\section{Related Work}
\label{Sec:relatedwork}
With the proliferation of microgrids, major concerns about cyberattacks on such systems via smart meters have arisen. The United States Department of Homeland Security reported 224 destructive cyber intrusions against local electric utilities between 2013 and 2014 \cite{5}.
\paragraph{AMI Security}
Security concerns have been intensively investigated in the past few years as a key component of the IIoT. Authors in \cite{du2019sdn} studied the security approach to mitigate cyber-attacks in the context of the IIoT. The main assumption was that the attackers have sufficient tools to identify advanced vulnerabilities that enable them to attack IIoT systems. In \cite{li2017consortium}, Li \emph{et al.} used consortium blockchain technology to overcome transaction constraints in the IIoT. However, fewer studies have been conducted on the security of various components of AMI systems. The work in \cite{7036881} introduced a security protocol that preserves AMI private information while securely delivering control packets at the exact time. In \cite{6720175}, Fasial \emph{et al.} investigated the feasibility of employing data stream mining to improve AMI cybersecurity via NIDS. Yan \emph{et al.} \cite{6574667} describe an SG AMI security framework where different security concerns associated with AMI deployment are considered. 

\paragraph{ML/DL-based and FL-based NIDSs} 
 Many ML algorithms have been utilized to boost the NIDS in the past years to understand complicated network traffic better \cite{10328057,eddin2022fine,albaseer2022fine}. ML/DL-based NIDSs are used to identify unknown intrusions by analyzing the statistical characteristics of the network traffic. However, DL-based solutions have shown better performance, especially in extracting knowledge from complicated features rather than the shallow features in ML-based detectors \cite{10328057,eddin2022fine,albaseer2022fine}. {Recently, FL has been increasingly employed in the realm of NIDSs for the collaborative design of ML and DL-based detectors \cite{kang2019incentive}. Specifically, the work presented in \cite{popoola2021federated} developed a cooperative detector capable of identifying zero-day botnet attacks in oT networks using FL. Extending FL's utility to energy systems, \cite{badr2023privacy} introduced a privacy-preserving and communication-efficient FL-based energy predictor for net-metering systems. This approach combines a hybrid DL model for high-accuracy energy forecasting with an Inner-Product Functional Encryption scheme to encrypt local model parameters, maintaining data privacy.}

\paragraph{Honeypot Deployment Based Incentive Mechanisms}
Honeypots are practical security tools {that} deceive cyber attackers by acting as vulnerability traps \cite{20}. It has been widely used to enhance defensive performance on different systems \cite{9,21}. Tian \emph{et al.} \cite{21} developed a honeypot system to defend against APT attacks in the SG, mainly in the bus nodes. In their system, low-interaction and high-interaction honeypots were applied. Similarly, Wang \emph{et al.} \cite{7} proposed a honeypot scheme with various mixed distributions to detect the AMI network traffic. However, Wang \emph{et al.} failed to consider how using a honeypot system could reduce TPR defensive pressure while increasing AMI defensive effectiveness. Using a honeypot, Du and Wang \cite{9} captured distributed denial of service attacks (DDoS) attacks in AMI. The implemented methodology uses incomplete information static game and honeypot deployment to investigate non-cooperative problems. Householders, for example, install small wind turbines for small enterprises to sell excess electricity and get a profit from utility companies \cite{3}. 
The work in \cite{9397770} proposed a game theory-based approach to designing an incentive mechanism that allows SPSs to share their defense data with TPRs; thus, the TPR can pay the rewards accordingly. 

To summarize, significant effort has been made in the literature to overcome the challenge of deploying the honeypot and exchanging the obtained defensive data in AMI. 
However, most of these studies focused on log size and overlooked data redundancy that may be received from multiple SPSs. 
Furthermore, the privacy of the SPSs was completely ignored, posing an undesirable privacy risk to target consumers. 
Finally, there are additional costs associated with uploading large logs. 
Thus, in this paper, we propose FedPot. This efficient framework not only focuses on all these challenges but also solves advanced issues associated with malicious SPSs and ensures a more reliable defense model.

\section{System Model}
\label{sec:system_model}
\begin{figure}
	\centering  
	\includegraphics[width=\linewidth]{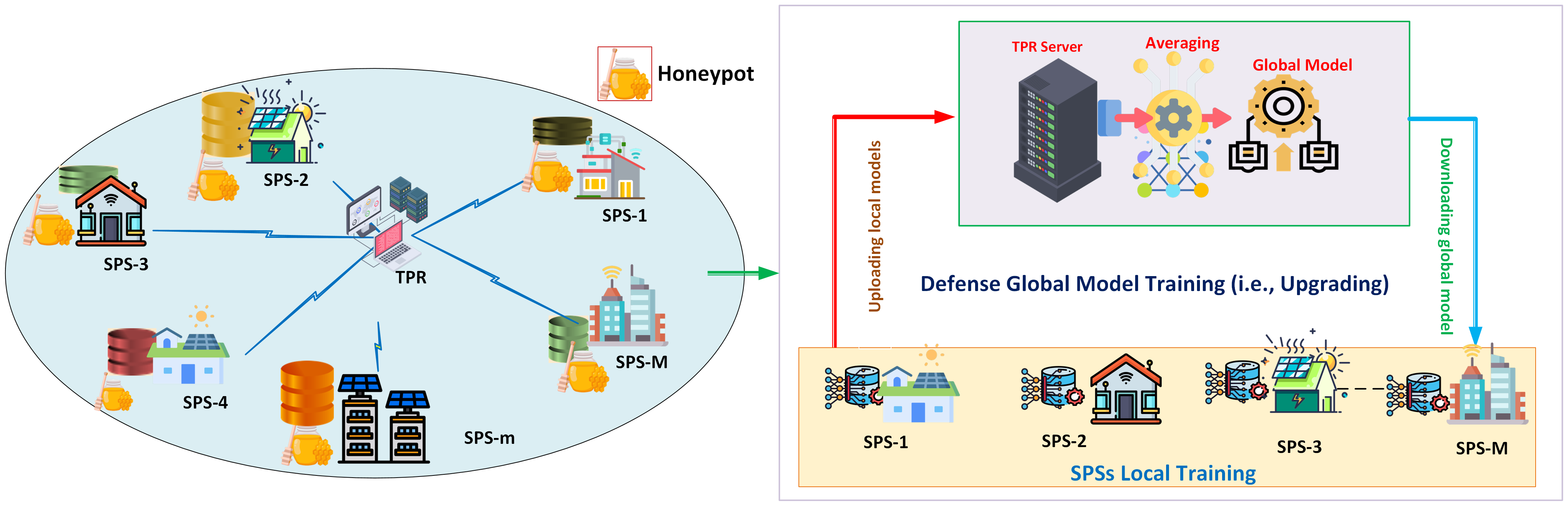}
	\caption{Micro-grid system where $M$ SPSs deploy honeypot and transfer security model with a single TPR.} 	\label{fig:sys_model}  
\end{figure}

% {\color{red}{This section serves to introduce the general...}}
This paper considers singular TPR within an AMI network, which gathers defensive data from $M$ interconnected SPSs, as depicted in \Cref{fig:sys_model}. 
{In our microgrid system, SPSs can acquire defense data by implementing and deploying honeypots. This includes mimicking real and different services (i.e., IEC 104 and GOOSE), applications, and typical vulnerabilities. The honeypot remains an attractive target while collecting diverse attack patterns to evolve with the changing landscape of threats. In this work, we consider two types of traffic captured by honeypots: fake traffic, which enables us to understand attacker tactics without data centralization, and legitimate traffic, which provides valuable insights into user behavior patterns processed in a decentralized manner to maintain privacy. To foster a diverse range of defense data, TPR may provide incentives to stimulate SPSs to establish honeypots, gather cyber-attack intelligence, encapsulate the defensive data, and transmit it to the TPR.}

Though SPSs can exchange defense data in the AMI network, there is an information imbalance between the TPR and SPS. This discrepancy arises because the SPS knows its valid defense data (VDD) while the TPR is not. The VDD includes the undisclosed attack event logs gathered by honeypots and used by the currently deployed TPR defensive system to improve the existing architecture.
{Practically, the honeypot logs amassed are extensive and could include users' sensitive information, leading to concerns surrounding computation and communication costs and potential privacy violations.}
FL can mitigate these issues by utilizing ML algorithms to extract knowledge from the local VDD of each SPS and share only the model attributes with the TPR.
Nevertheless, traditional FL Integration algorithms and incentive techniques are not ideally suited for the security defense framework within the AMI network. As a result, we introduce a privacy-preserving framework for effective honeypot-based detection. Our objective is to develop a method that promotes effective assurance of local model quality and provides incentives to bolster the defense system of the AMI network. We presume the SPSs hold $M$ distinct data types {that are separately distributed}.

Let $\theta_1\le \ldots \le \theta_m \le \ldots \le \theta_M$ denote the VDD type for each SPS, with a higher type signifying superior quality. To conduct FL model training, the TPR must ensure the desired quality at every global training round by devising an incentive set (i.e., contract) ${\Pi}=\{\pi_m=(\boldsymbol{V}_m, R_m)|m\in \{1,\ldots, M\}\}$, which establishes the relationship between rewards and local model quality based on the type, where $\boldsymbol{V}_m$ denotes the local model attributes of $m$ VDD type and $R_m$ represents the incentives provided by the TPR.
For participant selection, the TPR needs initial information such as how many SPSs are available to participate in defense model training, effort duration, declared data quality, and associated costs.
To guarantee an efficient training procedure, the TPR imposes a time constraint for every global round $z$, beyond which the model modifying and transferring must be finalized. The chosen SPSs employ their VDD to refine the global model and subsequently transfer the upgraded model to the TPR. According to \cite{incentive1}, the local model is formulated as follows:
\begin{equation}
\boldsymbol{V}_{m}(j)= \boldsymbol{V}_{m}(j-1)-\eta \nabla F_m(\boldsymbol{V}_{m}(j)),
\end{equation}
where $j = 1, 2, ...,$ $E\frac{D_m}{b}$ represents the index of local update for batch $b$ and the epoch count $E$, and
\begin{equation}
F_m(\mathbf{\boldsymbol{V}_m}) \triangleq \frac{1}{{D}_{m}}\sum_{i\in\mathcal{D}_{m}}{F_{i}(\mathbf{\boldsymbol{V}_m}, x_{i}, y_{i}}),
\end{equation}
denotes the local loss function, {quantifying }the model's error concerning local data samples, $D_m$ signifies the quantity of data examples in the gathered records (i.e., traffic logs), and $\eta$ stands for the learning rate. Upon completion of local training by all chosen SPSs, their updates are transmitted to the TPR. Then, the TPR consolidates these updates and modifies the global model given by:
\begin{equation}
{\mathbf{\boldsymbol{V}(z)}={\sum_{m=1}^{M}\frac{D_m}{D} \mathbf{\boldsymbol{V}}_{m}}\label{eq:globalAverage},}
\end{equation}
where
$
D = \sum_{m=1}^M D_m,
$
represents the aggregate samples across all SPSs.
Moreover, to guarantee that the attained accuracy is generalized, the TPR evaluates the uploaded models and provides incentives only to SPSs exceeding the threshold, $\psi_m$, as determined by the quality assessment algorithm discussed subsequently. SPSs that pass the quality evaluation receive rewards.
The TPR iterates this procedure for $Z$ global iterations until the shared security model reaches the target accuracy level.
In this paper, we highlight that the incentive set is constructed every global round, where the TPR aims to sustain the desired accuracy while allowing for a slight increment in succeeding rounds.
\subsection{Participation Willingness} 
In addition to the cost of deploying the honeypot, the suppliers bear the cost of training and uploading the local models. This cost includes the time and energy consumption for the honeypot deployment as well as training and uploading their local models. As such, the TPR cannot serve all existing devices due to insufficient bandwidth sub-channels. Thus, in every FL round, just a subset of the participants $\mathcal{S}_z$ {can} send their updates. The chosen set is described as follows:
\begin{equation}
    \mathcal{S}_z=\{m \mid S^m_z=1, m=1,2,...,M\},
\end{equation}
where $S^m_z=1$ denotes that device $m$ is in the selected set $\mathcal{S}_z$, otherwise $S^m_z=0$.
The latency among the chosen set consists of two parts: uploading latency and computing latency. 
For the uploading latency, all devices have the same defense model structure determined by TPR, with a size ${{\boldsymbol{v}}_z}$ denoted by $\xi $. Generally, we consider that the orthogonal frequency division multiple access (OFDMA) scheme is used for the communication between the BS and the devices, where each $m$-\textit{th} device is given bandwidth of size $\lambda^m_z B$. Additionally, based on the size of the model, the system bandwidth is split into N sub-channels  given by:
$
    N = \frac{B}{\xi},
$
where each participant is assigned an $1$-\textit{th} sub-channel of size $\lambda^m_z B$ . As a result, the m-\textit{th} device is capable of achieving a data rate of
$ \alpha_m^z=\lambda^m_z B\ln{\left( 1+\frac{P^{m}_z {|h^{m}_z|^2}}{N_0} \right)},
$ where $h^{m}_z$ represents the channel gain between the BS and client $m$, while $P^{m}_z$ denotes the transmit power of the $m$-\textit{th} client, and $N_0$ denotes background noise (i.e, Additive White Gaussian Noise (AWGN)). Hence, the uploading latency is given by:
$
    T_m^{trans}= \frac{\xi }{ \alpha_m^z}, \cdot  
$
and the related transmission cost is defined as:
$
    C_u = T_m^{trans} p_m^{trans},
$
where $p_m^{trans}$ is the transmit power. 
Regarding the calculated latency, every device requires a time duration defined as:
$
    T^m_{cmp}= E\frac{\phi_m D_m}{f_m},  
$
to train its local model, where $\phi_m$ (cycles/sample) denotes the number of processing cycles to execute one sample, and $f_m$ (cycles/second) is the central processing unit (CPU) frequency. Accordingly, the training cost is defined as: 
$
    C_t = \hat{\kappa}_mf_m^3 T^m_{cmp}, 
$
where ${\kappa}_m$ is the capacitor's coefficient related to the chip.
As a result, the combined latency for computation and uploading for each $m$-th participant during the $z$-th round is expressed as:
$
     T^m_{total}= T^m_{trans} + T^m_{cmp}.
$
In a real FL scenario, the coordinating server (i.e., TPR) establishes a time constraint, i.e., round deadline constraint, ensuring that every SPS participant completes their tasks within this specified timeframe.
Specifically, this time can be identified based on the latency of the slowest chosen SPS $m \in \mathcal{S}_z$, which is defined as:
\begin{equation} 
\label{eq:deadline}
T^{max} = \max\{S^m_z(T^m_{trans} + T^m_{cmp})\}.
\end{equation}
{It is worth emphasizing that while SPSs do benefit from the globally trained model, this alone does not ensure a model tailored to their unique threat landscape. Incentives encourage SPSs to actively contribute quality data, keeping the model updated and effective against evolving cybersecurity threats,{ leading} to solving the free-rider problem in FL. Incentives also offset the operational costs of maintaining honeypots and promote diverse participation. This results in a more robust and tailored global model, rewarding contributors with a system better suited to their security needs.}
\subsection{SPS and TPR Utilities}
To calculate the SPS utility, which involves implementing the honeypot and exchanging information associated with the VDD data (i.e., the updated local model), we initially have to compute the total incurred cost in the following manner:
\begin{equation}
  C_{m}  = C_d + C_t + C_u,
\end{equation}
where $C_d$ represents the cost related to running the honeypot, $C_t$ is the cost related to updating the local model utilizing the gathered logs, and $C_u$ is the cost related {to uploading} the local update. Accordingly, the utility of each SPS is expressed as:
\begin{equation}\label{SPS_utlity}
U_{m}=\ln(\theta_{m} R_{m}) -  C_{m}.
\end{equation}
Equ. \eqref{SPS_utlity} implies that while the reward increases, the utility does not increase at the same rate due to the diminishing returns property of the logarithmic function.
% \subsection{TPR Utility}
Once all SPSs taking part in the updating and uploading of the global defense model finish their tasks, the utility of the TPR is computed as follows:
\begin{equation}\label{Utility_server}
U_{TPR} = \sum_{m=1}^{M}( \theta_{m}G(V_{m})-R_{m}),
\end{equation}
{where} $G(V_{m})$ represents the revenue obtained from each SPS's update. Specifically, each SPS shares its model, $V_{m}$, which will add a profit, $G(V_{m})$ to TPR with associated costs.
\subsection{Adversarial Model and Assumption}
% gab: I suggest adding a figure with the 2 levels of attacks. 
In this research paper, we analyze two stages of adversarial behavior using the N-BaIoT dataset. The first stage involves deploying botnets to carry out DDoS attacks by introducing networked zombies into the SPSs. {The SPS implements a honeypot system to counter these threats, collect valuable information about the attack surface,} and document all activities. These logs are then integrated into a collaborative model training approach, enabling information exchange with the TPR and enhancing understanding of the attacker's behavior. The second stage of adversarial conduct occurs during the FL process when a malicious SPS falsely claims superior local data quality while contributing subpar updates to the model. To develop a secure and efficient FL process that can mitigate such issues, we put forward two distinct aggregation and averaging techniques. These methods will be discussed in greater detail later in the paper, providing insight into their practical applications and advantages.
%%%%%%%%%%%%%%%%%%%%%%%%%%%%%%%%%%%%%%%%%%%%%%%%%%%%%%%%%%%%%%%%%%%%%%%%%%%
\section{Problem Formulation}
\label{sec:problemformulation}
The TPR aims to optimize data quality to boost the performance of the target model and assures fair incentives to all participating SPSs. This is equivalent to the optimization problem given by:
\begin{align}\label{Problem_1}
\textbf{P1.} \quad &	\max _{\boldsymbol{\Pi}}: \sum_{z=1}^{Z}\ \sum_{m=1}^{M} S^m_z( \theta_{m}G(V_{m})-R_{m})
\end{align}
such that
\begin{align}
%s.t. & \nonumber\\
	C1: &\ln(\theta_{m}R_{m}) -  C_{m} \ge 0, \forall m \in \{1, 2, ..., M\}, \\
	C2: &\ln(\theta_{m}R_{m}) - C_{m}  \ge \theta_{m}R_{m'} -  (C_{m}^{m'}), \nonumber\\
	& \forall m'\neq m, \ m,m' \in \{1, \dots, M\}, \\
	C3: & \sum_{z=1}^{Z}\sum_{m=1}^{M} R_{m} \leq B, \\
	C4: & T_m^{total} < T^{max},   \quad  \forall m \in \{1, 2, ..., M\},\\ 
	C5: & R_m > 0, 
\end{align}
where $C_{m}^{m'}$ is the cost of the supplier in type $m$, selecting a contract of type $m'$, and $B$ is the allocated total budget for upgrading the security defense model. In \eqref{Problem_1}, C1 is the individual rationality (IR) constraint in which each supplier has to gain non-negative utility. Constraint C2 is incentive compatibility (IC), in which each supplier should choose the exact incentive aligned with its type. In C3, the retailer guarantees that the rewards delivered to the participants do not surpass the allocated budget. The delay constraint given by C4 ensures that participants complete the assigned tasks in a defined period. In C5, each participating SPS should receive non-negative rewards. It is worth noting that the TPR can pay for the model upgrade from the security market if the required rewards are high and exceed its allocated budget. 

Finding the direct solution for \eqref{Problem_1} is extremely intricate due to the need for previous knowledge of all participants over all rounds, which is nearly unattainable. Additionally, the contracts must be adjusted based on the SPSs' participation over the training period since contributions decrease over time, and the model may converge to various stationary points, resulting in slower convergence. To address these challenges, we propose to transform the problem into an online problem in which the retailer can redesign the contracts every round. The problem is reformulated as:
\begin{align}\label{Problem_2}
\textbf{P2.} \quad &	\max _{\boldsymbol{\Pi}}: \sum_{m=1}^{M} S^m_z( \theta_{m}G(V_{m})-R_{m})
\end{align}
such that
\begin{align}
%	s.t. & \nonumber\\
	C1: &\ln(\theta_{m}R_{m}) -  C_{m} \ge 0, \forall  m \in \{1, \dots, M\}, \\
	C2: &\ln(\theta_{m}R_{m}) - C_{m}  \ge \ln(\theta_{m}R_{m'}) -  C_{m}^{m'}, \nonumber\\
	&\forall m'\neq m, \ m,m' \in \{1, \dots, M\}, \\
	C3: & \sum_{m=1}^{M} R_{m} \leq B_z, \\
	C4: & T_m^{total} < T^{max} \quad \forall m \in \{1, 2, ..., M\},\\ 
	C5: & R_m > 0, 
\end{align}
where $B_z$ is the budget allocated for round $z$. For example, $B_z = \frac{B}{Z}$, if we aim to allocate a fixed budget in each round, it can be adjusted depending on the TPR's gain.  

Clearly, \eqref{Problem_2} is intractable, and the constraints are coupled. In particular, constraint C1 in \eqref{Problem_2} related to IR and $M(M-1)$ constraints related to IC make finding a direct solution for \eqref{Problem_2} very challenging. Thus, we start by reducing the IR and IC constraints. Then, we propose FedPot, a framework that includes tractable solutions to these challenges. FedPot is divided into two blocks, one on the SPS side and one on the TPR side. On the TPR side, we initially use the local data quality to ensure maximum utility for the TPR. Then we relax the problem to enable the TPR to select the optimal participating SPS. Next, we propose two averaging schemes considering the security aspects. To ensure fair incentives, we also proposed two reward schemes based on the claimed data quality and the contribution to the global model. 

\section{Proposed Solution}
\label{sec:proposedsol}
This section introduces the proposed solutions, including how we reduced the constraints' complexity, the prior quality determination of the local VDD, the trusted and untrusted model averaging schemes, and the rewards distribution. 
The proposed solution for the SPS side starts by deploying a honeypot on the SPS side to record all network traffic into log files. The logs are then transformed into a readable format (i.e., CSV) and cleaned with an extensive preprocessing step. {It is worth mentioning that the practice of anonymizing data (i.e., CSV files) is a commonly used method to protect sensitive information. However, several factors motivate us to use FL rather than simply anonymizing the data. First, FL mitigates the risk of data re-identification, a vulnerability in anonymization. Second, unlike anonymized CSV files, FL only shares aggregated model parameters, preserving data patterns while reducing exposure risks. Third, FL minimizes the chance of large-scale breaches by avoiding centralized data storage. Lastly, FL enables collaborative learning across distributed honeypots {without centralization, reducing communication and storage costs.}} We propose a local evaluation scheme in which each SPS evaluates its local VDD to determine whether to participate in the FL learning process. It is important to mention that the SPS is willing to participate only if the required quality is achieved. \Cref{fig:system_ill} illustrates the whole procedure performed at both the SPS and TPR.
\begin{figure*}
    \centering
    \includegraphics[scale=0.5]{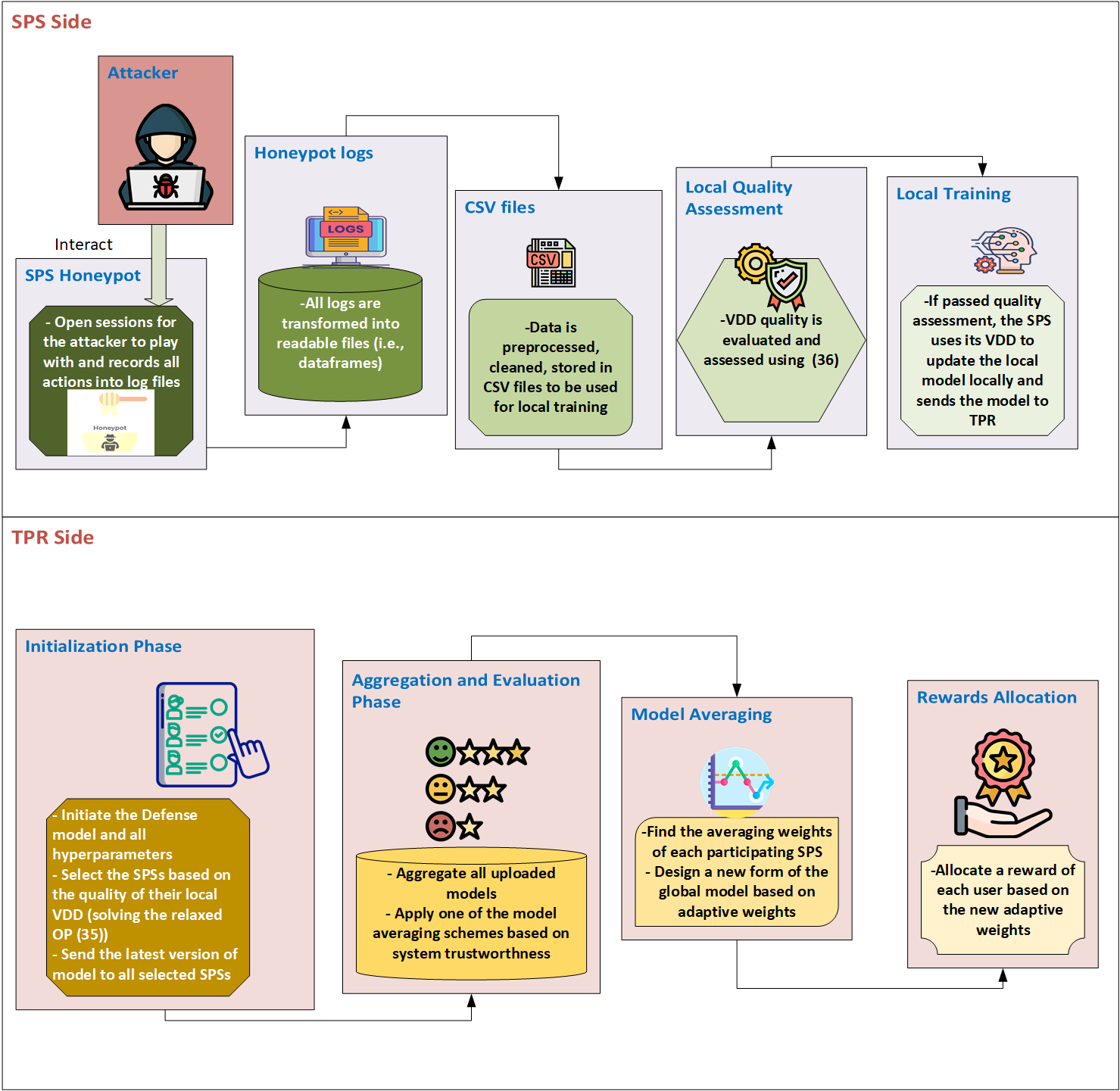}
    \caption{FedPot Architecture.}
    \label{fig:system_ill}
\end{figure*}

\subsection{Constraints complexity reduction}
As stated previously, it is almost intractable to directly solve \eqref{Problem_2}. To simplify, we present the subsequent lemmas.
\begin{lemma}
\label{lemma1}
Given the budget for each round $B_z$, for any feasible contract $(\boldsymbol{V}_m, R_m)$, $R_m \geq R_{m'}$ $\iff$ $m \geq m'$ $\forall \quad m, m' \in \{1, 2, ..., M\}$
\end{lemma}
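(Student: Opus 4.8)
The plan is to treat Lemma~\ref{lemma1} as the standard \emph{monotonicity (sorting) property} of an incentive-compatible menu of contracts and to derive it directly from the incentive-compatibility constraint (C2) of \eqref{Problem_2}. Since the statement is an equivalence, I would prove the two implications separately, but both reduce to comparing a single pair of types through their IC constraints. I would fix two distinct indices $m$ and $m'$ and, without loss of generality, assume $m \ge m'$ (so that $\theta_m \ge \theta_{m'}$ by the ordering $\theta_1 \le \cdots \le \theta_M$); the goal is then to show $R_m \ge R_{m'}$, and conversely that $R_m \ge R_{m'}$ forces $m \ge m'$.

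First I would write out the two IC constraints for the pair $(m,m')$: that a type-$m$ supplier does not prefer contract $m'$,
\begin{equation}
\ln(\theta_m R_m) - C_m \ge \ln(\theta_m R_{m'}) - C_m^{m'}, \nonumber
\end{equation}
and the symmetric constraint that a type-$m'$ supplier does not prefer contract $m$,
\begin{equation}
\ln(\theta_{m'} R_{m'}) - C_{m'} \ge \ln(\theta_{m'} R_m) - C_{m'}^{m}. \nonumber
\end{equation}
The key structural observation is $\ln(\theta_m R_k) = \ln \theta_m + \ln R_k$, so the type-dependent term cancels within each inequality. The first then collapses to $\ln R_m - \ln R_{m'} \ge C_m - C_m^{m'}$. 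Invoking the natural cost-ordering property---fulfilling one's own (higher) contract costs a given supplier at least as much as mimicking a lower-type contract, i.e.\ $C_m = C_m^{m} \ge C_m^{m'}$ whenever $m \ge m'$---makes the right-hand side nonnegative, and since $\ln$ is strictly increasing this yields $R_m \ge R_{m'}$. For the converse I would argue by contraposition using the symmetric constraint, which gives $\ln R_m - \ln R_{m'} \le C_{m'}^{m} - C_{m'}$, so that $m < m'$ forces $R_m \le R_{m'}$; sandwiching $\ln R_m - \ln R_{m'}$ between the two cost differences simultaneously confirms the two constraints are mutually consistent.

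I expect the main obstacle to be precisely that the utility $\ln(\theta_m R_m) - C_m$ is additively separable in $\ln \theta_m$ and $\ln R_m$, so the usual single-crossing effect in the reward dimension does \emph{not} come from the valuation term---it cancels. Consequently the entire monotonicity rests on the cost schedule $C_m^{k}$ being monotone in the contract index $k$, and the delicate step is to state and justify this ordering cleanly, ideally as a consequence of higher-quality models demanding more training and upload effort in the cost terms feeding \eqref{SPS_utlity}. A secondary subtlety is the strictness needed for the ``$\iff$'': to rule out $R_m = R_{m'}$ with $m \ne m'$ and thereby obtain the equivalence rather than merely a nondecreasing map, I would either assume the type ordering is strict or sharpen the cost inequality to a strict one, which I would flag explicitly rather than leave implicit.
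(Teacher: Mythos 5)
Your route is genuinely different from the paper's, and your ``main obstacle'' remark actually pinpoints the soft spot in the paper's own argument. The paper proves (only) the forward direction by the classical sorting argument: it writes the two pairwise IC constraints with the mimicking cost $C_m^{m'}$ silently replaced by $C_{m'}$, adds them so that all cost terms cancel, and concludes $(\theta_m-\theta_{m'})(R_m-R_{m'})\ge 0$, hence $R_m\ge R_{m'}$. That final step is exactly the single-crossing effect you observed does \emph{not} survive the logarithm: under the utility $\ln(\theta_m R_m)-C_m$ of \eqref{Problem_2}, adding the two IC constraints gives $\ln\theta_m+\ln R_m+\ln\theta_{m'}+\ln R_{m'}\ge \ln\theta_m+\ln R_{m'}+\ln\theta_{m'}+\ln R_m$, i.e.\ $0\ge 0$, so the paper's product inequality really presupposes the multiplicative form $\theta_m R_{m'}$ (which is what appears on the right-hand side of constraint C2 of \textbf{P1}, where no log is applied). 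Your proof instead cancels $\ln\theta_m$ inside each constraint separately and shifts the entire burden onto the cost schedule via $C_m^{m'}\le C_m$ for $m'\le m$; this is coherent under the log utility as written in \textbf{P2}, but that cost ordering is a hypothesis the paper never states (it says nothing about how $C_m^{m'}$ varies with $m'$), so you prove the lemma under an assumption the paper does not supply, whereas the paper needs no cost hypothesis (its costs cancel upon addition) but does need the non-logarithmic utility. In short: the paper buys monotonicity from type--reward complementarity, you buy it from cost monotonicity, and under the formulation of \textbf{P2} exactly as written only your route (with its extra assumption made explicit) actually closes. You are also right to flag strictness for the ``$\iff$'': the paper proves only $m\ge m'\Rightarrow R_m\ge R_{m'}$ and silently omits the converse, which indeed requires strict type separation or a strict cost inequality; neither proof as it stands delivers the full equivalence.
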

\begin{proof} 
we first prove that if $\theta_m \geq \theta_{m'}$ where $m \geq m',$ then $R_m \geq R_{m'}$. Adding the IC constraints for both types $\theta_m$ and $\theta_{m'}$, yields:
\begin{equation}
\label{pro1}
    \ln(\theta_m R_m ) - C_m \geq \ln(\theta_m R_{m'}) - C_{m'},
\end{equation}
and 
\begin{equation}
\label{pro2}
    \ln(\theta_{m'} R_{m'}) - C_{m'} \geq \ln(\theta_{m'} R_{m}) - C_{m},
\end{equation}
We add both \eqref{pro1} and \eqref{pro2}, we have:
\begin{equation}
\label{pro3}
    \ln(\theta_m - \theta_{m'})(R_m-R_{m'}) \geq 0.
\end{equation}
Thus, $R_m \geq R_{m'}$.
\end{proof}
From \Cref{lemma1}, we note that more rewards will be given to more participants. This means that \Cref{lemma1} is monotonic. Hence, this analysis of IC constraints  {reduces} the IR constraints as indicated in the following lemma.
\begin{lemma}
\label{lemma_2}
Given the $B_z$, and \Cref{lemma1} related to IC constraints and the sorted participants based on their data quality, the IR condition can be reduced as:
\begin{equation}
    \ln(\theta_{1}R_{1}) -  C_{1} \ge 0 
\end{equation}
\end{lemma}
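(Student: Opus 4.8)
The plan is to show that, once the incentive-compatibility (IC) constraints are imposed---and hence the reward monotonicity of \Cref{lemma1} holds---the per-type utility $U_m = \ln(\theta_m R_m) - C_m$ is non-decreasing in the type index $m$. If this monotonicity holds, the minimum utility over all types is attained at $m=1$, so the single inequality $\ln(\theta_1 R_1) - C_1 \ge 0$ forces every remaining IR constraint $\ln(\theta_m R_m) - C_m \ge 0$ to hold automatically. This collapses the $M$ individual-rationality constraints to the one stated in the lemma.

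First I would write the IC constraint $C2$ for type $m$ imitating the adjacent lower type $m-1$, namely
\begin{equation}
\ln(\theta_m R_m) - C_m \ge \ln(\theta_m R_{m-1}) - C_m^{m-1},
\end{equation}
whose right-hand side is the utility a type-$m$ supplier would obtain by mimicking the type-$(m-1)$ contract. Next I would compare this right-hand side with the genuine utility $U_{m-1} = \ln(\theta_{m-1} R_{m-1}) - C_{m-1}$. Because the participants are sorted with $\theta_m \ge \theta_{m-1}$ and $R_{m-1} > 0$ by constraint $C5$, we have $\ln(\theta_m R_{m-1}) \ge \ln(\theta_{m-1} R_{m-1})$; combining this with the cost relation $C_m^{m-1} \le C_{m-1}$ for a more efficient (higher) type yields $\ln(\theta_m R_{m-1}) - C_m^{m-1} \ge U_{m-1}$. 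Chaining the two inequalities then gives $U_m \ge U_{m-1}$.

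Finally I would apply this step inductively from $m=2$ up to $m=M$ to obtain $U_M \ge U_{M-1} \ge \cdots \ge U_1$. Hence $U_1 \ge 0$ implies $U_m \ge 0$ for every $m$, so all higher-type IR constraints are subsumed by $\ln(\theta_1 R_1) - C_1 \ge 0$, which is precisely the reduced condition claimed. The structural input here is \Cref{lemma1}: its monotonicity guarantees that the rewards $R_{m-1}$ entering the logarithms are positive and ordered, so the log comparisons are well defined.

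The step I expect to be the main obstacle is the cost comparison $C_m^{m-1} \le C_{m-1}$, since the reduction hinges on the mimicking cost of a higher type never exceeding the own-cost of the lower type it imitates. I would justify this either from a single-crossing/efficiency property of the cost model (a higher-quality SPS incurs no greater effort for a given contract), or---if the cost is taken to depend only on the contract actually selected rather than on the true type, so that $C_m^{m-1} = C_{m-1}$---the inequality holds trivially and the monotonicity of $U_m$ follows purely from the ordering $\theta_m \ge \theta_{m-1}$.
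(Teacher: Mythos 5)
Your proof is correct, and it rests on the same three ingredients as the paper's own proof: the IC constraints, the type ordering $\theta_1 \le \cdots \le \theta_M$, and the identification of the mimicking cost with the own cost of the imitated contract. The route differs in its decomposition. The paper uses a single global deviation --- type $m$ mimicking type $1$ --- and chains $\ln(\theta_m R_m) - C_m \ge \ln(\theta_m R_1) - C_1 \ge \ln(\theta_1 R_1) - C_1 \ge 0$ in one line; you instead use only adjacent (downward-local) IC constraints and induct, which establishes the strictly stronger intermediate fact that equilibrium utilities are monotone, $U_1 \le U_2 \le \cdots \le U_M$. Your version buys more than the lemma needs: that utility monotonicity is precisely what underpins the paper's subsequent replacement of the full IC family by the local conditions LDIC and LUIC in \eqref{eq:LDIC}--\eqref{eq:LUIC}, so your argument anticipates that step, while the paper's one-shot chain is the more economical proof of \Cref{lemma_2} as stated. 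On the cost comparison you flag as the main obstacle: the paper resolves it silently, writing $C_1$ where its own constraint C2 of \textbf{P2} would put $C_m^{1}$ --- i.e., it adopts exactly your second reading, that the training/upload cost attaches to the contract selected rather than to the true type, so $C_m^{m'} = C_{m'}$ holds by convention and no single-crossing assumption is required. Your explicit handling of this point is a genuine improvement in rigor over the paper's presentation, since without some such assumption on $C_m^{m'}$ neither proof goes through.
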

\begin{proof}
\label{pro4}
Given the sorted SPSs based on their types as defined in \Cref{sec:system_model}, the IC constraints are used as follows:
\begin{equation}
    \ln(\theta_m R_m ) - C_m \geq \ln(\theta_m R_1) - C_1 \geq \ln(\theta_1 R_1) - C_1 \geq 0.
\end{equation}
From \eqref{pro4}, we note that if the first SPS's type meets the IR constraint, all other SPSs' types will automatically meet other IR constraints. Therefore, the IR of type $1$ is sufficient to achieve all other IR constraints.
\end{proof}
According to \Cref{lemma1,lemma_2}, we can transform the IC constraint into  local downward incentive compatibility (LDIC) as follows~\cite{icentive_FL1}:
\begin{equation}
\label{eq:LDIC}
    \ln(\theta_{m}R_{m}) - C_{m}  \ge \ln(\theta_{m}R_{m-1}) -  C_{m-1}, \forall n \in \{2, 3, ..., M\},
\end{equation}
and local upward incentive compatibility (LUIC) as follows:
\begin{equation}
\label{eq:LUIC}
    \ln(\theta_{m}R_{m}) - C_{m}  \le \ln(\theta_{m}R_{m+1})  -  C_{m+1}, \forall n \in \{1, 2, ..., M-1\}.
\end{equation}
From \Cref{eq:LDIC,eq:LUIC}, we note that participants should be given rewards based solely on their contributions. Moreover, we observe that the objective function in \eqref{Problem_2} is decreasing w.r.t $R_m$ and increasing w.r.t $G(V_{m})$. 

\subsection{Quality Measure of the SPSs' VDD}
\label{Sub:data_quality}
The SPSs asymmetrically exchange information, sharing only the local model updated using the collected VDD with the TPR, rather than the VDD itself. Hence, the TPR is necessary to validate the uploaded models.
The quality of the local model depends on several factors, including the variety of the data, the number of contained classes, {and the number of updates performed}.
In this regard, we first model the SPS's data by considering the data quality, {and }, then by assessing the impact of the local models on the generalization of the global model. This strategy allows the TPR to guarantee a robust model even in the presence of malicious participants.
Each local dataset $\mathcal{D}m$ generally comprises traffic data with input-output pairs of $\{\mathbf{x}_{i,d}^{(m)},y_i^{(m)}\}^{D_m}_{i=1}$, where $\mathbf{x}_{i,d}^{(m)} \in \mathbb{R}^d$ denotes the input holding $d$ attributes, and $y_i^{(m)} \in \mathbb{R}$ denotes the matching class label. 
Given that every SPS has its own network traffic patterns and a varying response according to its activity, the honeypot logs produce data of various sizes and follow a non-identical and independently distributed (non-i.i.d.) manner.
Furthermore, the logs in each SPS may have distinct attack types, and some of the SPSs might not have any malicious samples.
Let $\mathcal{D} = \{\mathcal{D}_1 \cup \mathcal{D}_2 \cup \ldots, \cup~\mathcal{D}_M\}$ represent the aggregated dataset from all SPSs. We define the probability of a sample, $i$, being included in the local logs as:
\begin{equation}
\rho(\mathcal{D}_m|i) = \left\{
\begin{array}{l}
1: \mathrm{i \in  \mathcal{D}_m}\\
0: \mathrm{otherwise}
\end{array}.
\right.
\label{eq1}
\end{equation}
\begin{definition}[$\delta$-Data Coverage] With a predetermined radius, the coverage of data collection through sampling is obtained by: 

	\begin{equation}
		%C_{D}^{\mu}(\epsilon) = 
	\rho(\mathcal{D}_m,\delta)= \mathcal{D} \cap \cup_{x_{i} \in \mathcal{D}_m} \Lambda(x_{i}, \delta),
\end{equation}
where $\Lambda(x_{i}, \delta)$ is an open ball space with a radius of $\delta$ centered at $x_i$.
\end{definition} 
\noindent Considering the data space among all SPSs is {an Euclidean} space, the range of $\delta$ is confined to the interval $[0,\sqrt{d}]$.

\begin{definition}[VDD Quality], $\varphi(\mathcal{D}_m)$ gives the estimation of the local VDD quality measurements defined as:
\begin{equation}
	\varphi(\mathcal{D}_m)=\frac{1}{\sqrt{d}}\int_{0}^{\sqrt{d}} \rho(\mathcal{D}_m,\delta) \ {\rm d}\delta 
\end{equation}
\end{definition} 

\begin{remark}
The value of $\varphi(\mathcal{D}_m)$ can indicate the quality of the local VDD. A more increased value of $\varphi(\mathcal{D}_m)$ implies improved data diversity, resulting in a better quality of the locally updated model. This is because greater spaces between traffic samples grasp more patterns during model training, enhancing the overall generalization of the global model. Let $\phi={\varphi_1,...\varphi_{M}}$ be the quality set of all SPS with $M$ types, and all SPSs with $\varphi(\mathcal{D}_m)\in[\frac{m-1}{M},\frac{m}{M}]$ hold a data of type $m$. \end{remark}
To solve \Cref{Problem_2}, the TPR can leverage each participant's previous performance, which might be time-consuming since assessing all upgraded models is a post-processing action. As a result, the quantity of $\varphi(\mathcal{D}_m)$ can be utilized to measure local VDD quality before a given SPS is selected.
This approach allows the TPR to select high-quality data and avoid using models from participants with low-quality data, which can negatively impact the global model's performance. By employing this technique, the TPR can ensure that the final model is robust, even in the presence of malicious participants.

We can substitute  $\pi_m=(\boldsymbol{V}_m, R_m)$ in \Cref{sec:system_model} by $\pi_m=(\boldsymbol{\Omega}, R_m)$. Accordingly, we can further reformulate the optimization problem in \eqref{Problem_2} as follows:
\begin{align}\label{Problem_3}
\textbf{P3.} \quad &		\max _{\boldsymbol{\Pi}}: \sum_{m=1}^{M} S^m_z\varphi(\mathcal{D}_m) 
\end{align}
such that
\begin{align}
%	s.t. & \nonumber\\
	C1: &\ln(\theta_{1}R_{1}) -  C_{1} \ge 0,  \\
	C2: & \ln(\theta_{m}R_{m}) - C_{m}  \ge \ln(\theta_{m}R_{m-1}) -  C_{m-1}, \forall m, \\
	C3: &   \ln(\theta_{m}R_{m}) - C_{m}  \le \ln(\theta_{m}R_{m}) -  C_{m+1}, \forall m, \\
	C4: & \sum_{m=1}^{M} R_{m} \leq B_z, \\
	C5: & T_m^{total} \leq T^{max}  \quad \forall m \in \{1, 2, ..., M\},\\
	C6: & R_m > 0.
\end{align}
{It is worth mentioning that the optimization problem (\textbf{P3}) considers the lure score to adaptively place honeypots in the network to maximize their luring potential. As in our proposed approach, the lure score can be calculated based on various factors such as network traffic patterns, historical attack vectors, and even machine learning models trained for this specific purpose. \Cref{fig:Schematic} illustrates how the luring score can be determined.  The system ensures that honeypots are effectively deployed by considering the lure score, which will most likely capture meaningful data and contribute to better FL. Specifically, high-quality data, signifying that the user’s honeypot has successfully deceived the attackers, is defined by a luring score. This leads to the development of more accurate and comprehensive models, which are essential for detecting and mitigating sophisticated cyber threats.}
\begin{figure}
    \centering
    \includegraphics[width=0.9\linewidth]{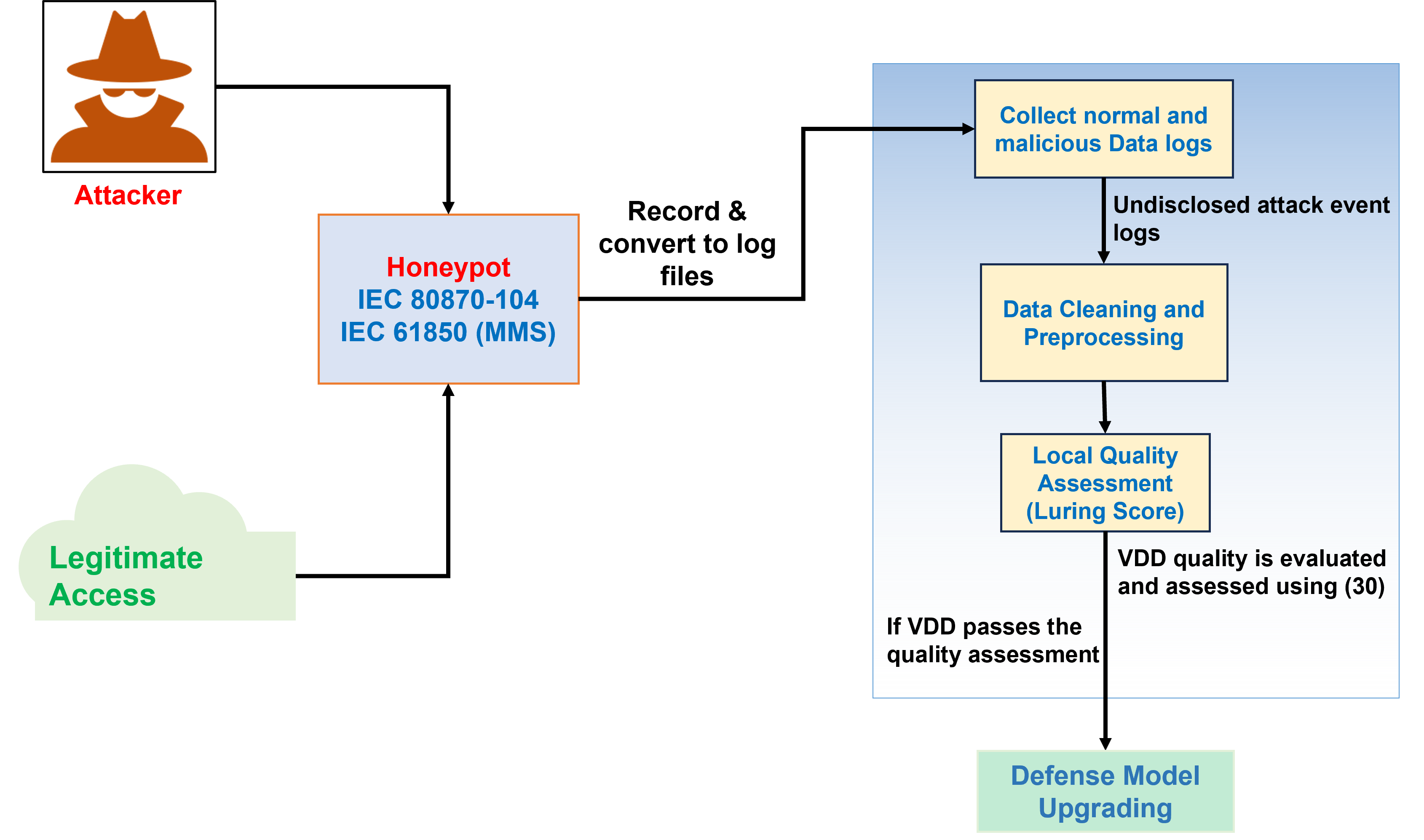}
    \caption{{Schematic illustrating how the luring score can be determined}}
    \label{fig:Schematic}
\end{figure}

However, we note that the problem in \eqref{Problem_3} is still challenging to be solved directly due to the monotonic constraints in C1, C2, and C3. The authors in \cite{icentive_FL1} proposed relaxing such constraints, enabling the problem to be directly solvable as an optimization problem. However, in the context of FL, it is not practical to apply such solutions due to the following: 
\begin{itemize}
    \item The server cannot adequately analyze local updates even if all selected SPSs are trusted. The VDD quality metric can only be utilized as described in \Cref{Sub:data_quality} only if all participants are completely trusted.
    
    \item Similar data may have varying contributions during the global training rounds, wherein it could initially make a higher contribution but gets reduced over time. For instance, the training process slows down if the model is closer to its stationary point, regardless of the associated cost or local data quality.
    Thus, the rewards should be given based on the value added to the global model every round, not as a fixed contract.
    \item The server should ensure fairness between the participants where the higher cost does not imply higher quality.
\end{itemize}

In the following section, we solve this problem iteratively by implementing a smooth FL process, including novel selection, aggregation, and averaging schemes. Then, we design a fair and efficient postprocessing rewards mechanism that fully ensures fairness so that the constraints (36-39) can be implicitly satisfied. The introduced solutions incorporate three phases; the first phase (i.e., the Preprocessing Phase) aims to solve an optimized relaxation problem to select the proper participants. The selected participants receive the global model, update it locally, and upload it to the TPR. In the second phase, the locally upgraded models by all participating SPSs are aggregated. The TPR evaluates the received models using two schemes (discussed below) and assigns a contribution rank for each model. In the third phase, the enhanced FedAvg based on adopted weights is applied under two scenarios when the participants are fully trusted and when some participants are malicious. The rewards are then given based on the contribution rank given to each participant (i.e., based on the uploaded model). 

\subsection{Quality-Assurance Model Averaging}
\label{scheme1}
The phases of the proposed solution are as follows. By resolving \eqref{Problem_3} during the initial phase (i.e., the preprocessing phase), it attempts to identify the appropriate SPSs. The identified participants will then receive the global model for local updating. The second phase combines the local updates from all SPSs to mitigate the challenges arising using the traditional FedAvg approach being assessed in two different settings. 

\paragraph{Trust-based Model Averaging}
The TPR trusts all participants, and the weight of each participant is determined according to their reported data quality:
\begin{equation}
\label{eq:trusted}
    v_m = \frac{\varphi(\mathcal{D}_m)}{\sum_{m=1}^{M}\varphi(\mathcal{D}_m)}
\end{equation}
Following that, the global model is generated and updated given by 
\begin{equation}
{\mathbf{\boldsymbol{V}(z)}={\sum_{m=1}^{M}v_m \mathbf{\boldsymbol{V}}_{m}}\label{eq:globalAveragemode1}.}
\end{equation}
In contrast to the conventional FedAvg, which utilizes the amount of data samples, this enhanced FedAvg does model averaging by weighting the local models according to the local data quality. Practically, weighting the model parameters according to the amount of data is futile since the data may contain redundancies and slightly affect the global model. Further details can be found in  \Cref{sec:eval}, where the conventional FedAvg approach demonstrates inferior performance.

\subsection{Two-steps verification mechanism}
\label{scheme2}
In this approach, the retailer initially selects the participating suppliers based on their local data quality. As explained in \Cref{Sub:data_quality}, the higher the data quality, the better the model update. Thus, the server adopts the value of $\varphi(\mathcal{D}_m)$ to prioritize the suppliers and select the best set. However, the malicious supplier may claim higher data quality. Therefore, the server needs to apply a further verification step by testing the model update before confirming the supplier's reward. This can be done by estimating the Euclidean distance between two consecutive updates of the global model and the updated local model, respectively, or by having global data samples and feeding them into the uploaded models one by one before conducting the global averaging. 

\paragraph{Untrust-based Model Averaging} In this scenario, some participating SPSs might declare to have valuable data, yet the shared models may be perturbed or generated randomly. Consequently, a two-step verification is proposed, where the TPR initially assesses the obtained model employing the Euclidean distance or utilizing generalized test data evaluating every model independently. Assuming the latter is utilized, the adjusted averaging weights can be determined using the formula as follows: 
\begin{equation}
\label{eq:untrusted}
    w_m = \frac{G(v_{m})}{\sum_{m=1}^{M}G(v_{m})}
\end{equation}
Following that, the model is generated and updated based on
\eqref{eq:globalAveragemode1}.

\subsection{Incentive Reward design}
\label{scheme3}
We propose applying the Soft-max function to determine the incentives provided to the selected SPSs after computing the authentic average weights employing one of the methods outlined in the preceding section. This will guarantee equality according to the contribution provided by each SPS. It can be expressed as follows:
\begin{equation}
    \label{eq:rewardcal}
    % \upsilon_m = \frac{e^{v_m}}{\sum_{m=1}^{M}
    w_m = \frac{e^{v_m}}{\sum_{m=1}^{M}
    e^{v_m}}. 
\end{equation}
It is important to note that for each round $z$, the total rewards will be less than the budget assigned. As a result, the participants receive the earned rewards from:
\begin{equation}
\label{eq:rewads_given}
    R_m = w_m B_z.
\end{equation}
We may infer from \eqref{eq:rewardcal} and \eqref{eq:rewads_given} that the obtained rewards ensure equality for all SPSs participants and completely meet the assigned budget. {It is worth noting that our approach considers the heterogeneity of threats and their distribution implying that most SPSs, over a reasonable time frame, are exposed to some kind of malicious activity. It might be accurate that at certain times, specific SPS face a higher volume or intensity of attacks, but when viewed over an extended period, the distribution becomes relatively even. We have designed the reward mechanism to factor in not just the volume but also the quality of the data. This way, even if an SPS does not face a high number of attacks, the uniqueness or novelty of a single quality data point they contribute could be of immense value, ensuring they are adequately rewarded.} Alg.  \ref{Alg1} outlines the overall process of the suggested approach.
In Alg. \ref{Alg1}, the TPR starts by initializing the model and the hyper-parameters (step 1), then gathering prior information from all SPSs (step 2). From steps 3–17, the TPR runs the FL algorithm by determining the deadline for each round (step 4) and solving the relaxed \textbf{P3} (step 5) to select the best participants while accounting for the stated data quality. The global model is then sent (step 6), and all selected participants undertake local training and submit all local models once finished (steps 7-10). The TPR aggregates all models (step 11) before deciding on one of the proposed schemes. If the TPR can confirm the trustworthiness of the participating SPSs, the solution in \Cref{scheme1} is applied (steps 12 and 13), and the solution in \Cref{scheme2} is applied (steps 15 and 16) if TPR can not. Finally, as explained in \Cref{scheme3}, the rewards are fairly given to the participants efficiently. 

\begin{algorithm}[t]
\footnotesize
    \caption{FedPot Framework}  \label{Alg1}
    \LinesNumbered
    \KwIn{All available SPSs $M$}
    \KwOut{Upgraded Security Model $V$}
    \textbf{Initialize:}
     starting global model $V_0$, local epochs $\varepsilon$, step size $\eta$, rounds $Z$, and assigned budget for every global round\;
  TPR collects previous information (e.g, $\varphi(\mathcal{D}_m)$, from existing SPSs $M$.\;
 \For{$z=1$ to $Z$}{
 The TPR establishes the time limit (i.e., deadline)\;
 TPR addresses the optimization problem, \textbf{P3}, to choose the best entities for participating in the model update\;
 TPR shares the model $V_{z-1}$ to chosen SPSs\;
 \For{Each SPS $m \in M$ synchronously}{
 SPS $m$ gets $V_{z-1}$\;
  SPS $m$ updates $m$ employing its VDD obtained data for $E$ epochs\;
  SPS $m$ uploads $V_m$ to TPR\;
 }
  The TPR gathers all submitted updates from participating SPSs\;
  \If{Every participant is trusted}
  {
  TPR employs Equation \eqref{eq:trusted} to recompute the weight associated with each update.
  }
  \Else 
  {
  TPR assesses each model using the generalization test data\;
  TPR utilizes Equation \eqref{eq:untrusted} to recompute the weight for every update
  }
  TPR utilizes equation (39) to generate a modified global model.
 }
\end{algorithm}
\subsection{{Architectural Mapping of the Proposed Solution}}
{It is worth noting that our control framework is designed to emulate the hierarchical architecture commonly observed in modern SG systems. It comprises three main layers: the Data Acquisition Layer, the Control Layer, and the Decision Layer. 
\textbf{Data Acquisition Layer}: This is the base layer where all sensors and IoT devices (as represented by our use of the N-BaIoT dataset in Section VI) are located. These devices are responsible for collecting real-time information such as voltage, current, and frequency from different points in the grid.
{\textbf{Control Layer}: where the real-time data is analyzed and control signals are generated. This layer integrates the security protocols as studied through the IEC 104 dataset in section VI-C. The layer includes controllers like Remote Terminal Units (RTUs) and programmable logic controllers (PLCs), which interact directly with the devices in the Data Acquisition Layer. The control layer continuously monitors and analyzes real-time data, identifying potential cyber threats or operational anomalies. This layer generates signals, executes responses to mitigate threats, and integrates various security protocols. A key feature is the strategic deployment of honeypots within the network. These honeypots serve as decoy systems designed to attract cyber attackers. This allows for an in-depth analysis of attack strategies, significantly enhancing the grid's cyber resilience and providing valuable insights into potential vulnerabilities.}
\textbf{Decision Layer}: This is the topmost layer, consisting of control centers or cloud-based systems where higher-level decision-making processes occur. Here, our proposed anomaly detection algorithms and security measures are implemented (i.e., ML-based detector).  
To generate realistic grid scenarios, we employed both the N-BaIoT and IEC 104 datasets, creating a diverse set of operating conditions and cyber-attack vectors. Our framework is also designed to be scalable and robust, which allows for the integration of additional sensors and control units as required. The proposed incentive-based model is implemented at the Decision Layer, ensuring that it benefits from the real-time data collected and analyzed at the lower layers and addressing the free-rider problem. This facilitates more effective and timely decision-making. It is worth noting that in this layer of the context of cybersecurity, test datasets cannot remain static. When a new type of attack is detected by a honeypot, it may initially be evaluated through heuristics, expert rules, and anomaly detection methods for interim validation. Contextual information provided by the SPS also helps the TPR to assess the contribution's potential impact. To corroborate new threats, we propose sharing them (anonymously) with a subset of trusted SPS for additional verification. As these new types of attacks are validated, they will be incorporated into future iterations of the test data, ensuring a relevant benchmark for subsequent evaluations.}

\subsection{{Theoretical Analysis of Robustness Against Malicious SPSs}}
\label{sec:robustness_theory}
{To validate the resilience of our Quality-Assurance Model Averaging and Two-step Verification Mechanism against malicious activity, we introduce some theoretical metrics and analysis.
Let \( \Delta V(z) \) denote the deviation of the global model \(\boldsymbol{V}(z)\) under the Trust-based Model Averaging scheme given by ~\eqref{eq:globalAveragemode1}:
\begin{equation}
    \Delta V(z) = \left\| \boldsymbol{V}(z) - \boldsymbol{V}^* \right\|
\end{equation}
Here, \(\boldsymbol{V}^*\) represents the optimal global model that would have been achieved without any malicious activity. We aim to bound \(\Delta V(z)\):
\begin{equation}
    \Delta V(z) \leq f\left(\varphi(\mathcal{D}_m), M, v_m, \ldots \right)
\end{equation}
Here, \(f\) is a function that encapsulates the contributions of local data quality \(\varphi(\mathcal{D}_m)\), the total number of participants \(M\), and their respective weights \(v_m\), among other parameters. Similarly, we define \(\Delta W(z)\) as the deviation under the Untrust-based Model Averaging scheme given by ~\eqref{eq:untrusted}.
\begin{equation}
    \Delta W(z) = \left\| \boldsymbol{V}(z) - \boldsymbol{V}^* \right\|
\end{equation}
We aim to bound \(\Delta W(z)\) as follows:
\begin{equation}
    \Delta W(z) \leq g\left( G(v_m), M, w_m, \ldots \right)
\end{equation}
Here, \(g\) is a function capturing the effects of the generalized test \(G(v_m)\), total number of participants \(M\), and the adjusted weights \(w_m\), among others. In essence, the bounded nature of \(\Delta V(z)\) and \(\Delta W(z)\) implies that our proposed schemes are robust against a variable number of malicious SPSs, maintaining the global model's efficacy.}

%%%%%%%%%%%%%%%%%%%%%%%%%%%%%%%%%%%%%%%%%%%%%%%%%%%%%%%%%%%%%%%%%%%%%%%%%%%
\section{Performance Evaluation}
\label{sec:eval}
In this section, we establish an experimental setting to evaluate the efficacy of our proposed methods for improving the security defense model and allocating fair incentives.

\subsection{Experimental Setup}
\label{sec:experiments}
\begin{figure*}
  \centering
  \begin{subfigure}{.32\linewidth}
    \includegraphics[width=\linewidth]{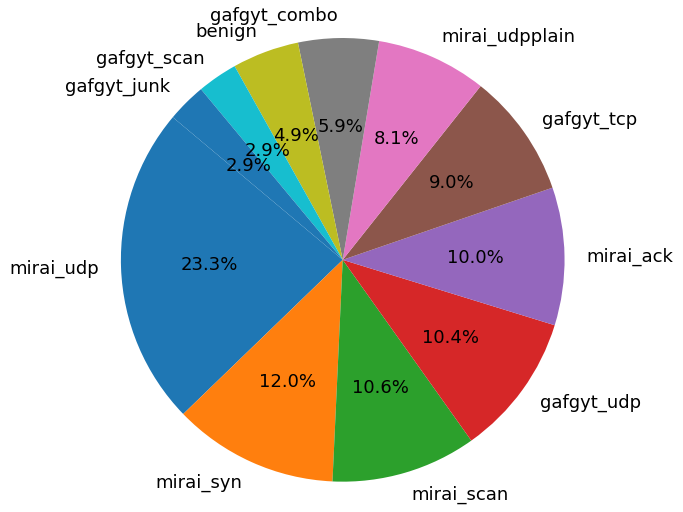}
    \caption{Device-1} % Optional: add a caption for the first subfigure
    \label{fig:sub1}
  \end{subfigure}\hfill
  \begin{subfigure}{.32\linewidth}
    \includegraphics[width=\linewidth]{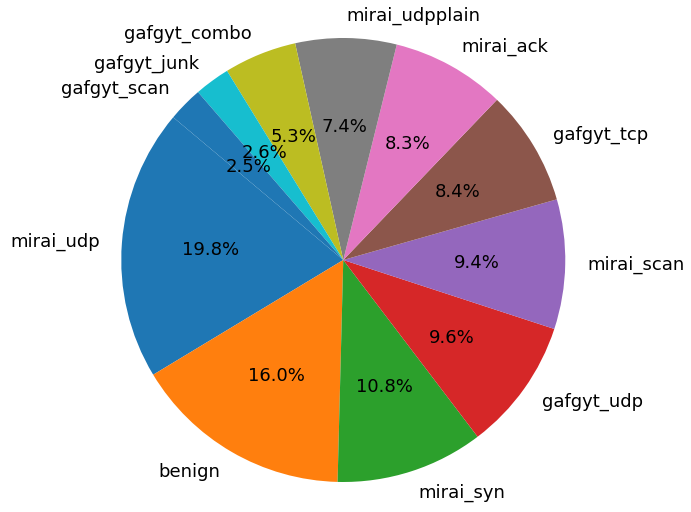}
    \caption{Device-2} % Optional: add a caption for the second subfigure
    \label{fig:sub2}
  \end{subfigure}\hfill
  \begin{subfigure}{.32\linewidth}
    \includegraphics[width=\linewidth]{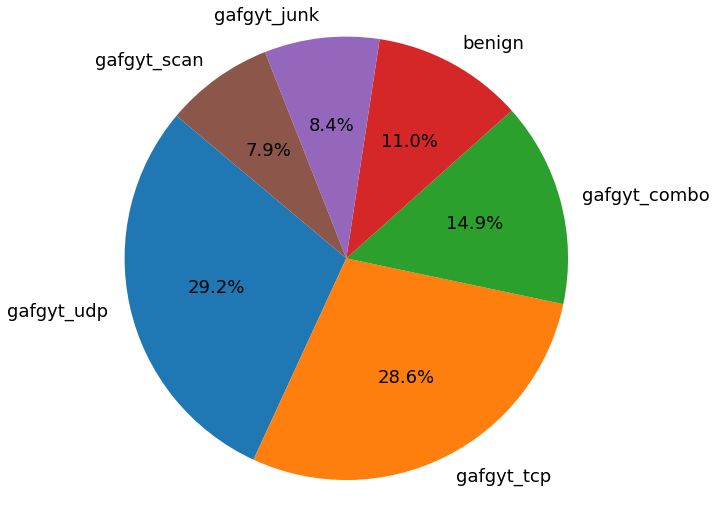}
    \caption{Device-3} % Optional: add a caption for the third subfigure
    \label{fig:sub3}
  \end{subfigure}
  \caption{{Label distribution among the logs of the first three devices.}}
  \label{fig:label_dist}
\end{figure*}
%\textbf{Dataset Description:} 
{We employ the N-BaIoT \cite{meidan2018n} as well as IEC 104 and IEC MMS datasets\cite{dataset}. N-BaIoT was initially designed to cover the security aspects of IoT devices. This dataset is crucial for modern SG systems increasingly relying on IoT sensors and actuators for efficient and intelligent grid management.  It includes data from benign operations as well as from a range of attacks like Mirai and Gafgyt (Bashlite). Mirai attacks focus on different flooding techniques using ACK, SYN, UDP, and other protocols while performing device vulnerability scans. Gafgyt attacks, similarly, engage in device scanning and deploy a mix of attack strategies, including TCP and UDP flooding. As in \Cref{fig:label_dist} for the first three devices, it consists of eleven classes, one benign, while the remaining ten represent adversarial classes (i.e., indicating malicious data), with each sample including 115 attributes. These datasets, in particular, include DDoS, which causes networks to become overloaded with traffic; bot attacks that take advantage of device flaws; man-in-the-middle attacks that intercept communications; SQL injection that manipulates databases; and XSS which injects malicious scripts. Firmware flashing, physical tampering, and side-channel attacks also represent advanced threats. In SG, DDoS attacks can disrupt grid communications, Bot attacks may compromise grid control systems, and MitM attacks pose risks to data integrity. SQL injection and XSS are relevant where SGs use web interfaces or databases. Firmware flashing and physical Tampering highlight the need to secure grid hardware, while side-channel attacks demonstrate the importance of protecting against indirect data leaks.}
% \begin{figure*}
%   \subfloat[]{%
%   \begin{minipage}{\linewidth}
%   \includegraphics[width=.3\linewidth]{figs/datasets/1.png}\hfill
%   \includegraphics[width=.3\linewidth]{figs/datasets/2.png}\hfill
%   \includegraphics[width=.3\linewidth]{figs/datasets/3.png}%
%  % \caption{Devices 1-3}
%   \end{minipage}%
%   }
%   % \par
%   % \subfloat[]{%
%   % \begin{minipage}{\linewidth}
%   % \includegraphics[width=.3\linewidth]{}\hfill
%   % \includegraphics[width=.3\linewidth]{}\hfill
%   % \includegraphics[width=.3\linewidth]{}%
%   % \caption{Devices 4-6}
%   % \end{minipage}%
%   % }\par
%   % \subfloat[]{%
%   % \begin{minipage}{\linewidth}
%   % \includegraphics[width=.3\linewidth]{}\hfill
%   % \includegraphics[width=.3\linewidth]{}\hfill
%   % \includegraphics[width=.3\linewidth]{}%
%   % \caption{Devices 7-9}
%   % \end{minipage}%
%   % }
%   \caption{\textcolor{red}{Label distribution among the logs of three sampled devices.}}
%   \label{fig:label_dist}
% \end{figure*}
%
%\paragraph{Benchmark} 

We compare our proposed method with the traditional FedAvg technique, where the server allocates weights to the local model updates based on the amount of traffic data samples they include. Furthermore, we evaluate the incentives-based algorithm against a customized-based algorithm.
%
%\paragraph{Deep learning architecture} 
We employ a supervised deep neural network (DNN) model consisting of 115 input layers followed by 115, 62, and 32 hidden layers in which $\frac{1}{2}$, $\frac{1}{4}$, and $\frac{1}{8}$ inputs follow each hidden layer, respectively. The cross-entropy is used as a loss function. 
%
%\paragraph{I.I.d and Non-i.i.d data distribution:} 
We initially consider the N-BaIoT dataset's initial device count, consisting of nine devices exhibiting non-i.i.d. characteristics. 
Similar parameters are adopted for all experiments. There are eight SPSs as participants in the FL configurations, with each SPS possessing one out of the nine devices present in the N-BaIoT datasets, including both benign and adversarial samples. One device is solely utilized to evaluate the overall model developed by each of the other eight participants. We conduct each experiment five times to ensure the accuracy of the findings. Then, 50 devices are added to the data distribution. Every device uses a $10$ local epoch and a $32$ local batch size. We employ an adjustable learning rate with a default value of $0.01$.

%\textbf{Evaluation Metrics:}
 We assess the efficiency of our proposed scheme using the following metrics accuracy, loss, true positive rate (TPRate), false positive rate (FPR), and F1-measure given by: 
$
    TPRate =\frac{TP}{TP+FN}, 
$ $
    TNR = \frac{TN}{TN+FP}
$, and $
   F1{-}score = \frac{2 * Precision * Recall}{Precision + Recall}
$. To assess the model's efficacy, we utilize the testing accuracy, which measures the proportion of accurately identified used incursions as a performance indicator.  
Finally, we use rewards distribution fairness to showcase the efficiency of the proposed rewards allocation framework.

\subsection{Numerical Results on BaIoT dataset}
In this section, we perform our numerical analysis on IDD and non-i.i.d assumptions.

\subsubsection{The IID Setting}

\Cref{fig:accu_comparetrusted_iid} provides an overview of the trust-based model averaging framework performance under several distinct settings: complete data offloading to TPR (centralized training), the conventional FedAvg, and the proposed enhanced-FedAvg (FedAvg based on data quality).
The results of the proposed algorithm demonstrate that it outperforms the conventional FL approach, which can be attributed to the effect of assigning model weights according to the local data quality instead of the data size as in conventional FedAvg. Furthermore, the enhanced FedAvg scheme can attain almost traditional centralized performance. In contrast, the traditional FedAvg algorithm needs more rounds to obtain the required {accuracy,} which increases the cost, leading to surpassing the budget assigned.

\begin{figure}[t]
    \centering
        \includegraphics[width=0.45\textwidth]{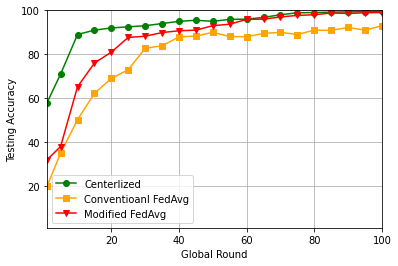}
         \caption{Comparative Analysis of Test Accuracy: Enhanced FedAvg, Conventional FedAvg, and Centralized-Based Algorithms.}
            \label{fig:accu_comparetrusted_iid}
 % \vspace*{-0.5cm}
  \end{figure}
  
  \begin{figure*}[t]
    \centering
       \begin{subfigure}[b]{0.32\textwidth}
    \includegraphics[width=\textwidth]{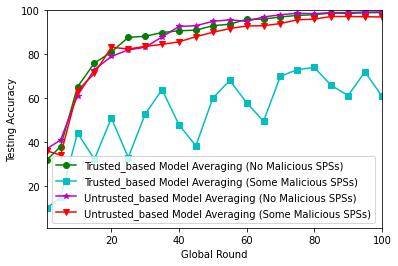}
    \caption{9 SPSs}
    \label{fig:accu_compareuntrusted1_iid}
     \end{subfigure}
    \begin{subfigure}[b]{0.32\textwidth}      
    \includegraphics[width=\textwidth]{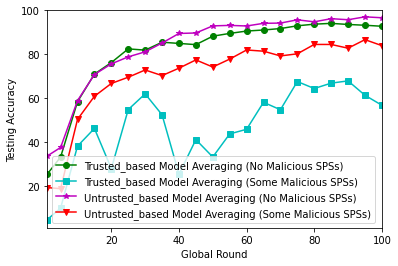}
    \caption{50 SPSs}
    \label{fig:accu_compareuntrusted2_iid}
     \end{subfigure}
    \begin{subfigure}[b]{0.32\textwidth}
    \includegraphics[width=\textwidth]{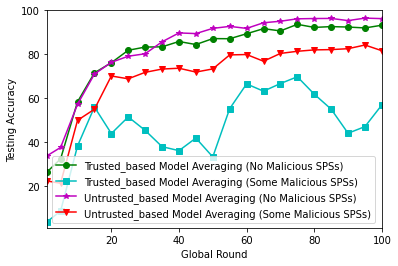}
    \caption{100 SPSs}
    \label{fig:accu_compareuntrusted3_iid}
     \end{subfigure}
         \caption{Test Accuracy Comparison of Proposed Schemes in the Presence of Malicious SPSs Claiming High-Quality Data While Uploading Defective Models (i.i.d. Data Distribution, N-BaIoT dataset).}
 % \vspace*{-0.5cm}
  \end{figure*}
\Cref{fig:accu_compareuntrusted1_iid} shows the impact of the untrust-based model averaging framework under two scenarios: when all participants are completely trusted and when some are malicious. Our observations reveal that depending solely on the data quality asserted by SPSs, the presence of corrupted models leads to subpar models. This outcome arises due to malicious participants who assert high-quality data yet disperse it arbitrarily or manipulate local models.
The suggested untrusted model averaging approach may efficiently remove detrimental models by assigning weights to the shared models according to the generalization test data.
This causes the effects of adversarial or disturbed models to vanish during the averaging phase or at least reduces their negative impact on the designed global model utilized for NIDS. 
Contrarily, malicious participants may significantly reduce the defensive model's effectiveness, despite the data quality they claimed throughout the selection process. We repeat similar experiments while splitting the data between 50 and 100 SPSs to showcase the effectiveness of the proposed approaches. 

As shown in \Cref{fig:accu_compareuntrusted2_iid,fig:accu_compareuntrusted3_iid}, we use similar settings but distribute the data across 50 and 100 SPSs and keep the labels in an i.i.d fashion. We follow the original data distribution, where the labels across the SPSs follow the same distribution. We observe that even if the data is i.i.d, the presence of some malicious participants affects the performance of the models, especially as the number of SPSs increases. Nevertheless, the untrust-based averaging scheme still performs better than the trust-based scheme. Yet, the accuracy is slightly decreased as the number of SPSs increases. 

Furthermore, we assess the efficacy of the proposed approaches regarding the fairness of the reward using the true positive rate (TPRate) and the true negative rate (TNR), critical performance indicators for the rewards allocation and detection rate, and in the security defense model. \Cref{tab:TPR-TNR_iid} shows that, whether malicious participants are present or not, both proposed approaches exceed the traditional FedAvg. Nonetheless, in the presence of malicious participants, the proposed trust-based model averaging (PTrusted) performance exhibits a considerable decline in TPRate, TNR, and fairness of rewards. This degradation occurs as we solely depend on the claimed data quality. 
On the other hand, the proposed untrust-based model averaging ensures adequate outcomes despite malicious participants. This is achieved through a two-step verification process that excludes altered models and accordingly assigns rewards, thereby achieving a high level of fairness. Meanwhile, FedAvg distributes the incentives randomly depending on the data size.
 \begin{table}[t]
     \centering
      \caption{Comparative Results for TPRate and TNR of Conventional FedAvg and Proposed Solutions Under Attack and Non-Attack Scenarios (i.i.d. Data Distribution, N-BaIoT dataset).}
     \begin{tabular}{|p{1.3cm}|p{0.5cm}|p{0.4cm}|p{0.5cm}|p{0.5cm}|p{0.7cm}|p{0.5cm}|}
     \hline
         Alg. & \multicolumn{2}{p{0.1cm}} Con. FedAvg & \multicolumn{2}{|p{0.1cm}} P-Trusted &   \multicolumn{2}{|p{0.1cm}} P-Untrusted \\ \hline
         Malicious SPSs & No & Yes &  No & Yes &  No & Yes \\ \hline
        TPRate & 89.7 & 52.9 & 92.95 & 76.3 & 95 & 93\\\hline
        TNR & 62.1 & 30.02 & 85 & 61.9 & 90.85 & 88\\ \hline
        Fairness & 50.01 & 31 & 93.9 & 55 & 95.1 & 94.7\\ \hline
     \end{tabular}
     \label{tab:TPR-TNR_iid}
 \end{table} 
 
\begin{figure*}[htbp]
    \centering
       \begin{subfigure}[b]{0.30\textwidth}
    \includegraphics[width=\textwidth]{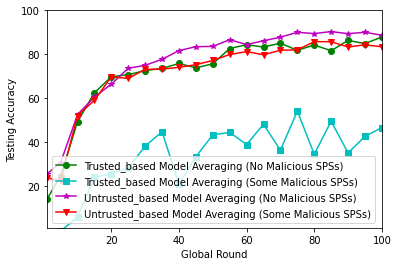}
    \caption{}
    \label{fig:accu_compareuntrusted1}
     \end{subfigure}
            \begin{subfigure}[b]{0.30\textwidth}
    \includegraphics[width=\textwidth]{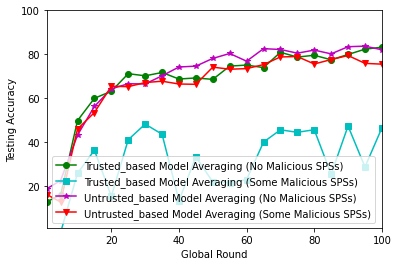}
    \caption{}
    \label{fig:accu_compareuntrusted2}
     \end{subfigure}
    \begin{subfigure}[b]{0.30\textwidth}
    \includegraphics[width=\textwidth]{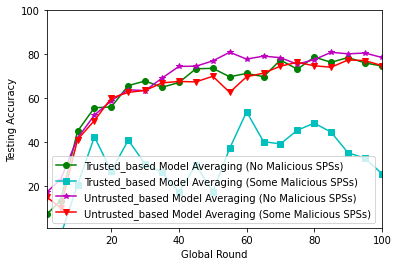}
    \caption{}
    \label{fig:accu_compareuntrusted3}
     \end{subfigure}
         \caption{{Test Accuracy Comparison of Proposed Schemes in the Presence of Malicious SPSs Claiming High-Quality Data While Uploading Defective Models (non-I.I.D. Data Distribution, N-BaIoT dataset).}}
 % \vspace*{-0.5cm}
  \end{figure*}

\subsubsection{The Non-IID setting}
For the non-i.i.d. setting, we assume that each SPS holds only a maximum of 2 types of attacks (i.e., Mirai UDP and combo). We consider three scenarios regarding the number of devices: 9, 50, and 100. We assign only a maximum of two classes for each SPS to ensure that the datasets amongst SPSs are non-i.i.d. The default data distribution is considered non-i.i.d. However, as illustrated in \Cref{fig:label_dist}, the labels' distribution across all devices is almost identical, even for those holding only 6 classes. Hence, in this work, we consider a realistic non-i.i.d. distribution to evaluate the performance in the worst situations. 

As shown in \Cref{fig:accu_compareuntrusted1}, we initially run the first scenario when the number of devices is only 9. One can see that the performance in general drops by approximately 15\% for both proposed averaging schemes. However, when some participants maliciously claimed high-quality local logs during the selection phase and uploaded perturbed models, the performance of the first proposed scheme (i.e., trust-based model averaging) dramatically dropped by almost 50\% to reach 46\% accuracy. This is due to adopting inaccurate weights during the evaluation phase, thereby impacting the performance of the global model. In contrast, the second proposed scheme (i.e., untrust-based model averaging) shows immunity against the perturbed models even when the data is non-i.i.d. 

In \Cref{fig:accu_compareuntrusted2,fig:accu_compareuntrusted3}, we repeat the same experiments, distributing the data across 50 and 100 SPSs, respectively. From a security perspective, the performance in both figures almost matches the performance in \Cref{fig:accu_compareuntrusted1}, where the second scheme outperforms the first averaging scheme when attacks are present. However, the convergence becomes slightly slower as the number of SPSs grows. 
In the case of massively distributed data due to the increased number of SPSs, more rounds are required to capture patterns from all available SPSs. 
 \begin{table}[t]
     \centering
      \caption{{Comparative Results for TPRate and TNR of Conventional FedAvg and Proposed Solutions Under Attack and Non-Attack Scenarios (Non-i.i.d. Data Distribution, N-BaIoT Dataset)}.}
     \begin{tabular}{|p{1.3cm}|p{0.5cm}|p{0.4cm}|p{0.5cm}|p{0.5cm}|p{0.7cm}|p{0.5cm}|}
     \hline
         Alg. & \multicolumn{2}{p{0.1cm}} Con. FedAvg & \multicolumn{2}{|p{0.1cm}} P-Trusted &   \multicolumn{2}{|p{0.1cm}} P-Untrusted \\ \hline
         Malicious SPSs & No & Yes &  No & Yes &  No & Yes \\ \hline
        TPRate  & 78.3 & 31.6 & 84.5 & 34.2 & 85.2 & 81.7\\\hline
        TNR & 56.4 & 31.4 & 82.6 & 46.2 & 82.3 & 79.1\\ \hline
        Fairness & 51.2 & 19.7 & 97.1 & 41.2 & 98 & 96.9\\ \hline
     \end{tabular}
     \label{tab:TPR-TNR}
 \end{table}
We also investigate the effect of all proposed schemes, including the reward mechanisms, on the overall performance in terms of detection rate (i.e., TPRate and TNR) and the fairness of reward distribution. From \Cref{tab:TPR-TNR}, it can be shown that proposed averaging methods significantly surpass the conventional FedAvg, whether or not adversarial participants are present. However, when particular participants are adversarial, the proposed trust-based model averaging (P-Trusted) suffers drastically in terms of TPRate, TNR, and reward fairness, which is unsurprising as these are heavily based on the data quality. {In contrast, the proposed untrust-based model averaging successfully filters the contaminated uploaded models using a two-stage verification process}. It distributes the rewards appropriately, achieving a high degree of fairness despite the presence of adversarial participants. FedAvg, on the other hand, distributes its incentives randomly according to the amount of data gathered. Nevertheless, all schemes are affected by the data distribution amongst SPSs, as we note when comparing the i.i.d. to non-i.i.d. results. 
\subsection{{Numerical Results on IEC 104 Dataset}}
 The IEC 104 Dataset was released by Brno University in March 2022 and represents a significant resource in the field of SG security \cite{dataset}. It comprises IEC 104 and IEC MMS headers and is primarily designed for anomaly detection and security monitoring. The dataset captures a broad spectrum of attack scenarios across different folders. For instance, the but-iec104-i folder contains various attacks, including Denial of Service (DoS), Injection, and Man-in-the-Middle (MITM) attacks. These offer a comprehensive view of the threat landscape in the Industrial.
Control Systems (ICS) communications. Another folder, vrt-iec104, presents a different set of intriguing attacks, such as value change and masquerading attacks.{ It is worth mentioning that The IEC 60870-104 (IEC 104) and IEC 61850 (MMS) datasets, produced by the "Security monitoring of communication in the smart grid (Bonnet)" project at the Brno University of Technology, Czech Republic (2019-2022), include CSV traces from PCAP files \cite{dataset}. These datasets, derived from both real device observations and virtual application monitoring, provide a comprehensive view of normal and attack traffic patterns within smart grid environments. This dual approach, blending real-world operational traffic with simulated attack scenarios, ensures the datasets support robust model training and validation, preparing the models to detect and mitigate both existing and emerging threats effectively.}
Despite its richness, the dataset was originally unlabeled and required substantial preprocessing. This step was critical to ensure that our subsequent analyses were based on accurate, well-defined data. The dataset is organized into multiple folders, each containing data related to either IEC 104 or IEC MMS headers. Each folder has a readme.txt file that provides valuable information, including data types and timestamps related to the attacks. This level of detail enabled us to better understand the structure and implications of the data.
We performed preprocessing steps through a multi-stage process to adequately prepare the data for our experimental framework. This involved labeling the samples into categories such as benign, switching, scanning, or communication interruption; cleaning the data to remove any irrelevant or inaccurate information, and finally transforming it into a format suitable for our machine learning algorithms. The encoding was significant for nominal or categorical data. The following section focuses on the numerical analysis conducted using the IEC 104 dataset. We examine performance under two conditions: the i.i.d. and non-i.i.d settings.
 \begin{figure*}[htbp]
    \centering
       \begin{subfigure}[b]{0.32\linewidth}
    \includegraphics[width=\linewidth]{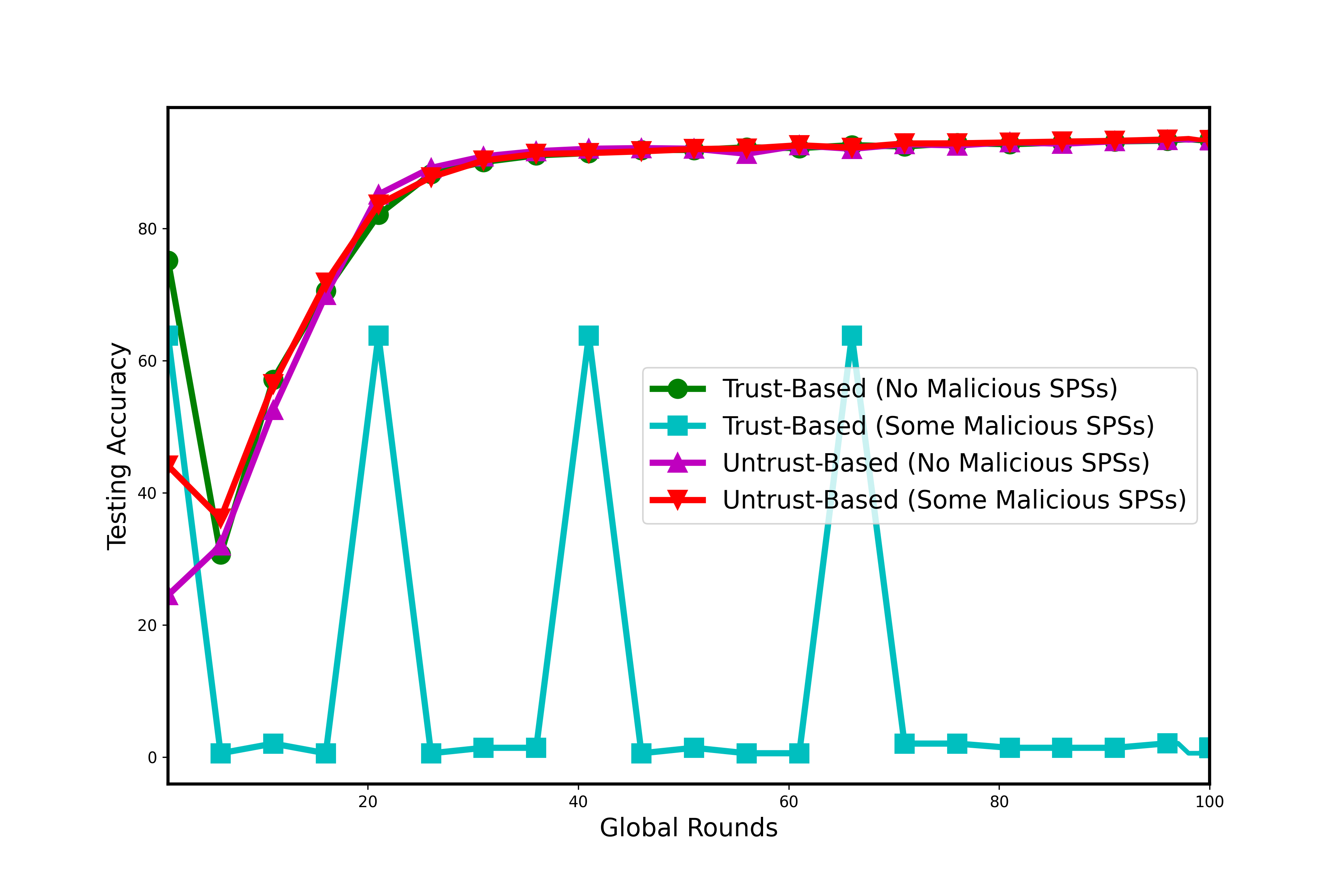}
    \caption{20}
    \label{fig:IEC_accu_iid_20}
     \end{subfigure}
            \begin{subfigure}[b]{0.32\linewidth}
    \includegraphics[width=\linewidth]{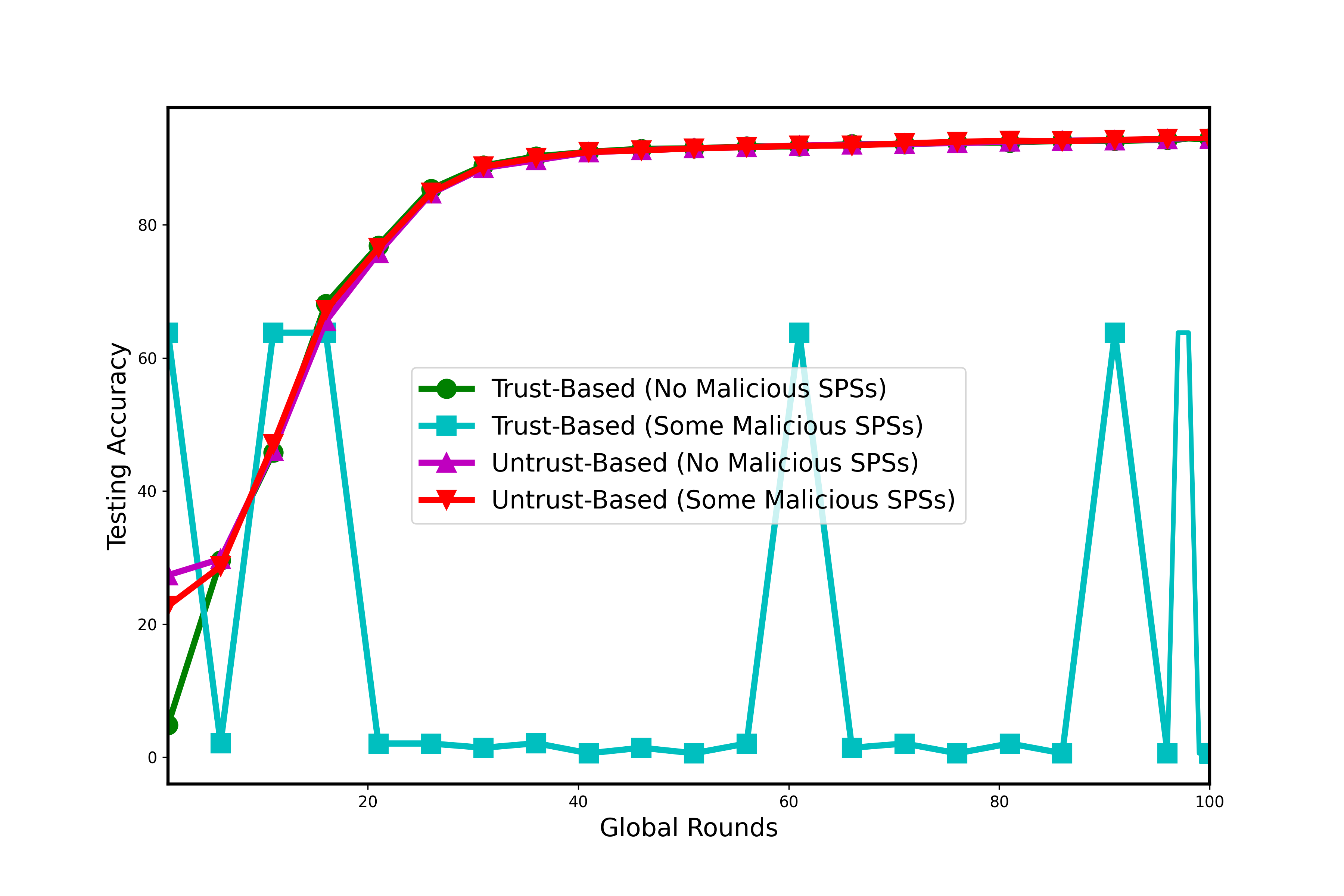}
    \caption{50}
    \label{fig:IEC_accu_iid_50}
     \end{subfigure}
    \begin{subfigure}[b]{0.32\linewidth}
    \includegraphics[width=\linewidth]{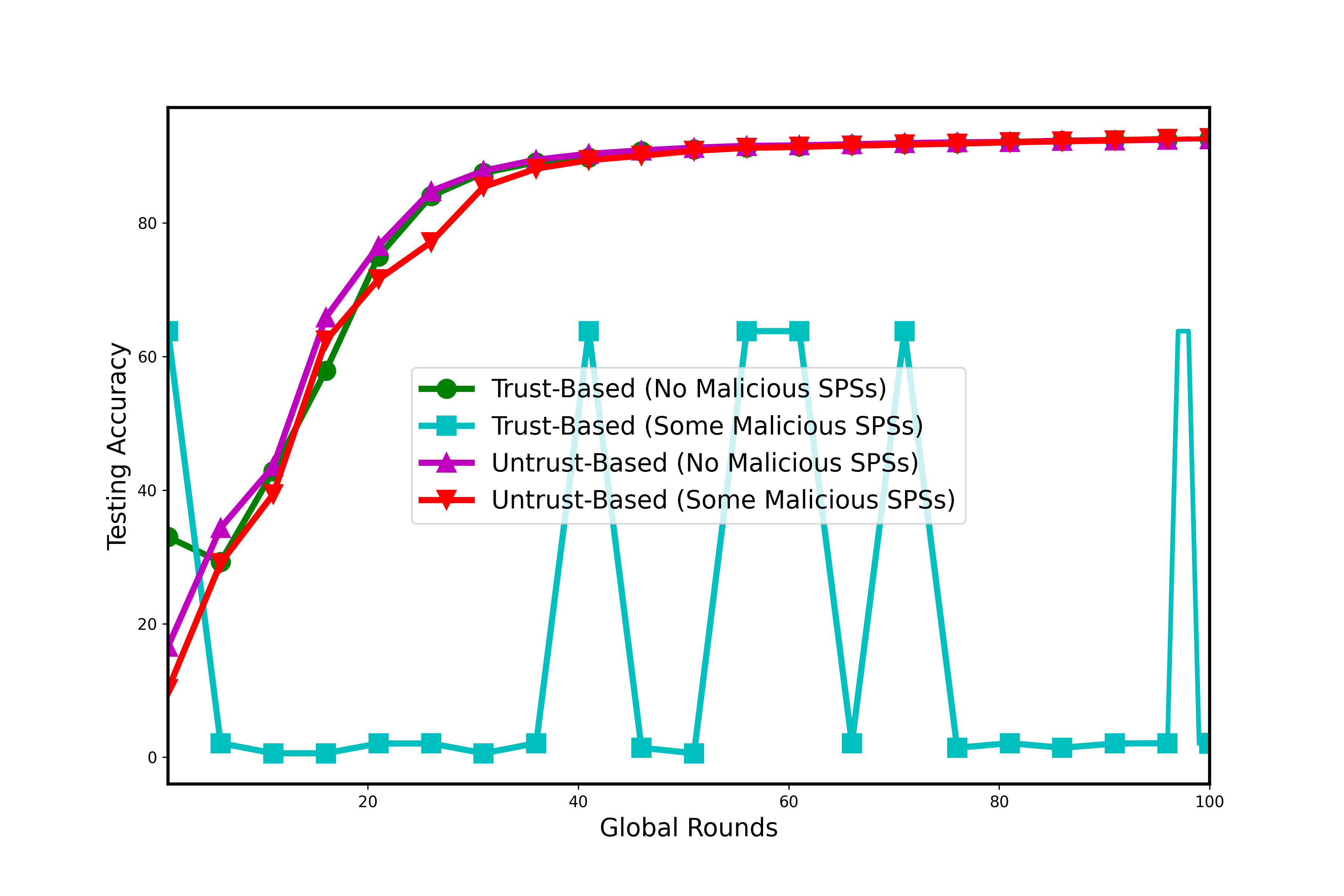}
    \caption{100}
    \label{fig:IEC_accu_iid_100}
     \end{subfigure}
         \caption{{Test Accuracy Comparison of Proposed Schemes in the Presence of Malicious SPSs Claiming High-Quality Data While Uploading Defective Models (I.I.D. Data Distribution, IEC 104 dataset).}}
 % \vspace*{-0.5cm}
  \end{figure*}
  
\begin{table}[t]
     \centering
      \caption{{Comparative Results for TPRate and TNR of Conventional FedAvg and Proposed Solutions Under Attack and Non-Attack Scenarios (Non-i.i.d Data Distribution and IEC 104 Dataset).}}
      % \color{red}
     \begin{tabular}{|p{1.3cm}|p{0.5cm}|p{0.4cm}|p{0.5cm}|p{0.5cm}|p{0.7cm}|p{0.5cm}|}
     \hline
         Alg. & \multicolumn{2}{p{0.1cm}} Con. FedAvg & \multicolumn{2}{|p{0.1cm}} P-Trusted &   \multicolumn{2}{|p{0.1cm}} P-Untrusted \\ \hline
         Malicious SPSs & No & Yes &  No & Yes &  No & Yes \\ \hline
        TPRate  & 81.2 & 28.2 & 86 & 23.8 & 92.7 & 87.3\\\hline
        TNR & 63.1 & 25.1 & 86.2 & 37.4 & 85.3 & 83.6\\ \hline
        Fairness & 69.8 & 14.3 & 98.6 & 32.7 & 97.8 & 94.6\\ \hline
     \end{tabular}
     \label{tab:IEC_TPR-TNR}
 \end{table}
 
 \begin{figure*}[htbp]
    \centering
       \begin{subfigure}[b]{0.32\linewidth}
    \includegraphics[width=\linewidth]{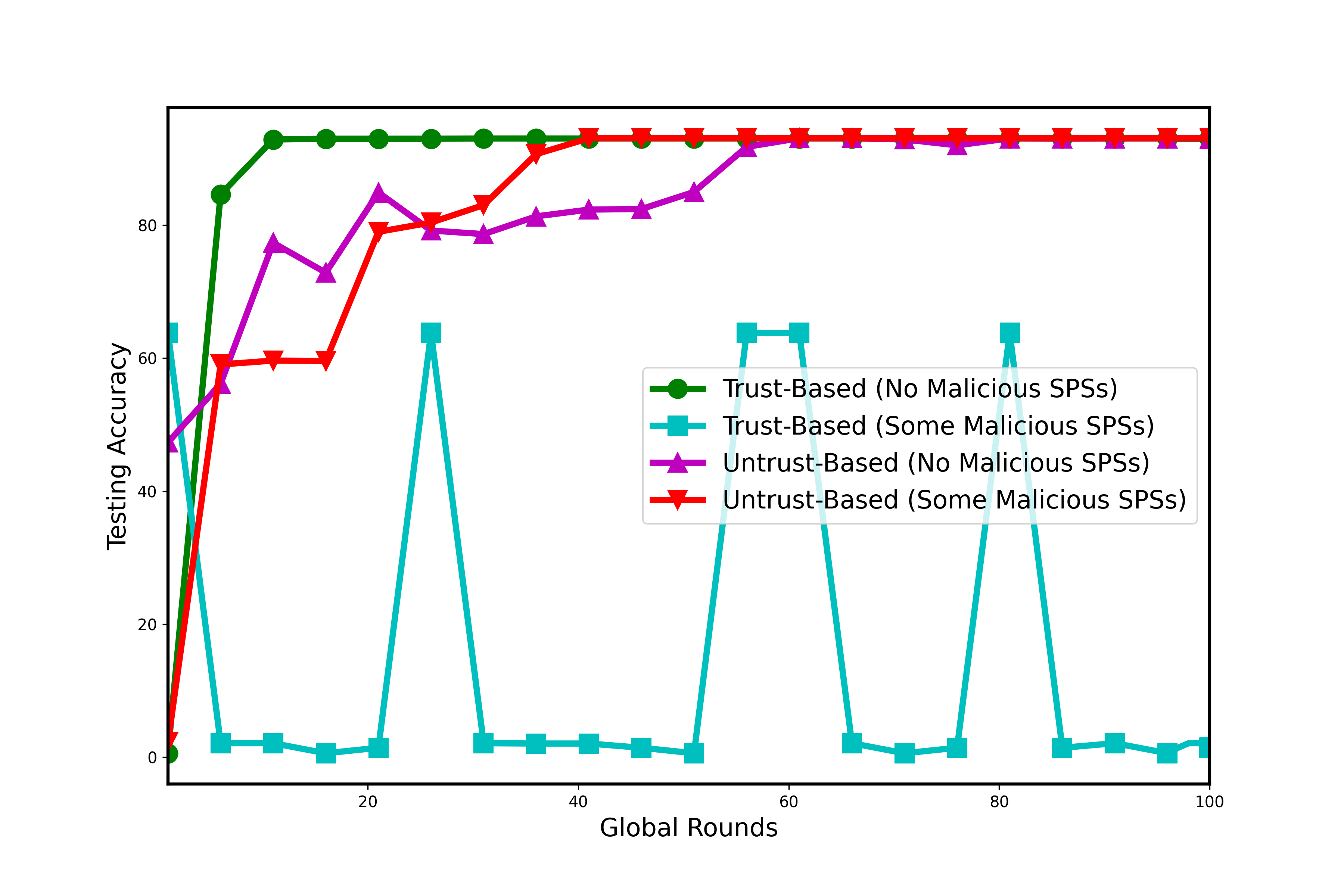}
    \caption{20}
    \label{fig:IEC_accu_noniid_5}
     \end{subfigure}
            \begin{subfigure}[b]{0.32\linewidth}
    \includegraphics[width=\linewidth]{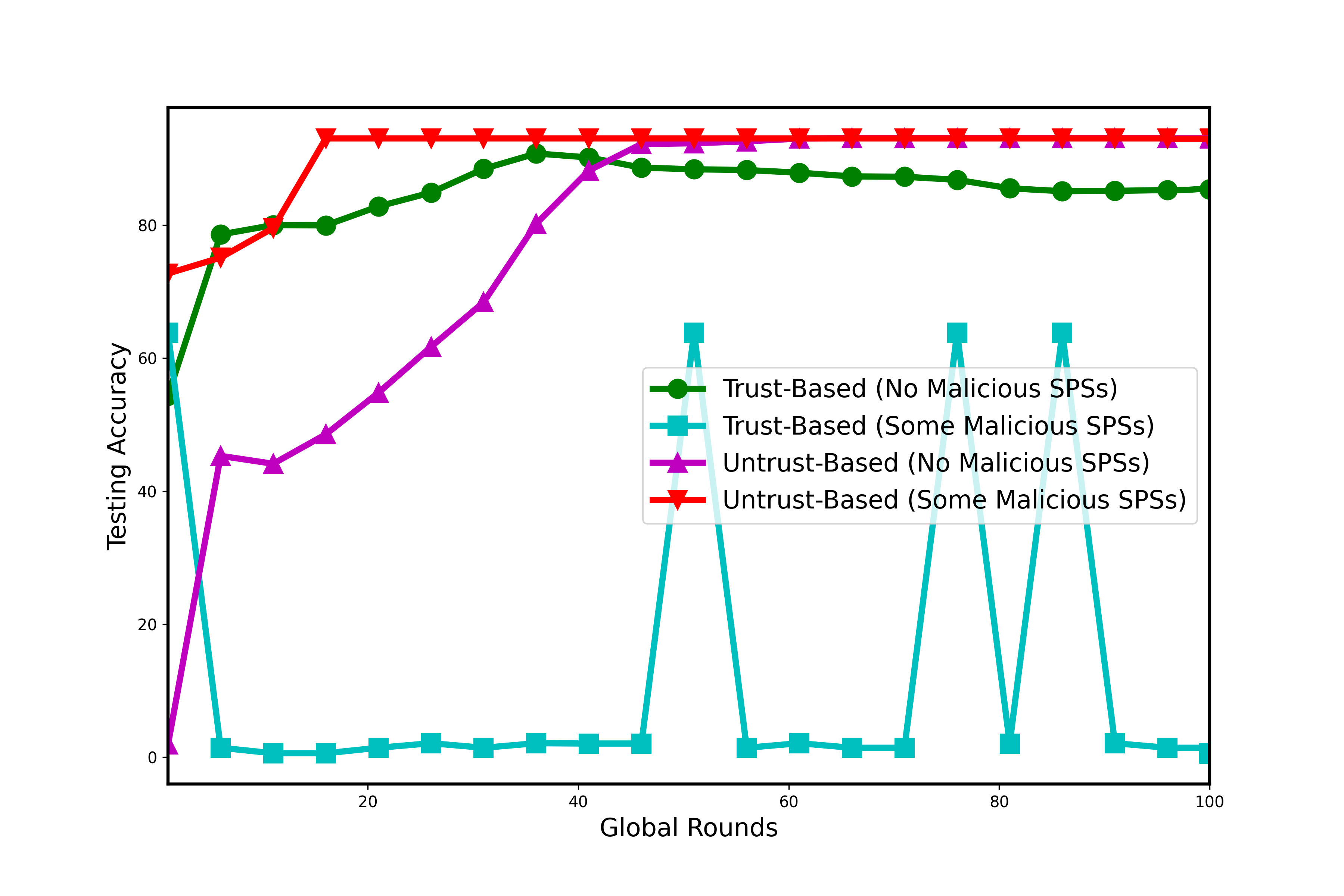}
    \caption{50}
    \label{fig:IEC_accu_noniid_50}
     \end{subfigure}
    \begin{subfigure}[b]{0.32\linewidth}
    \includegraphics[width=\linewidth]{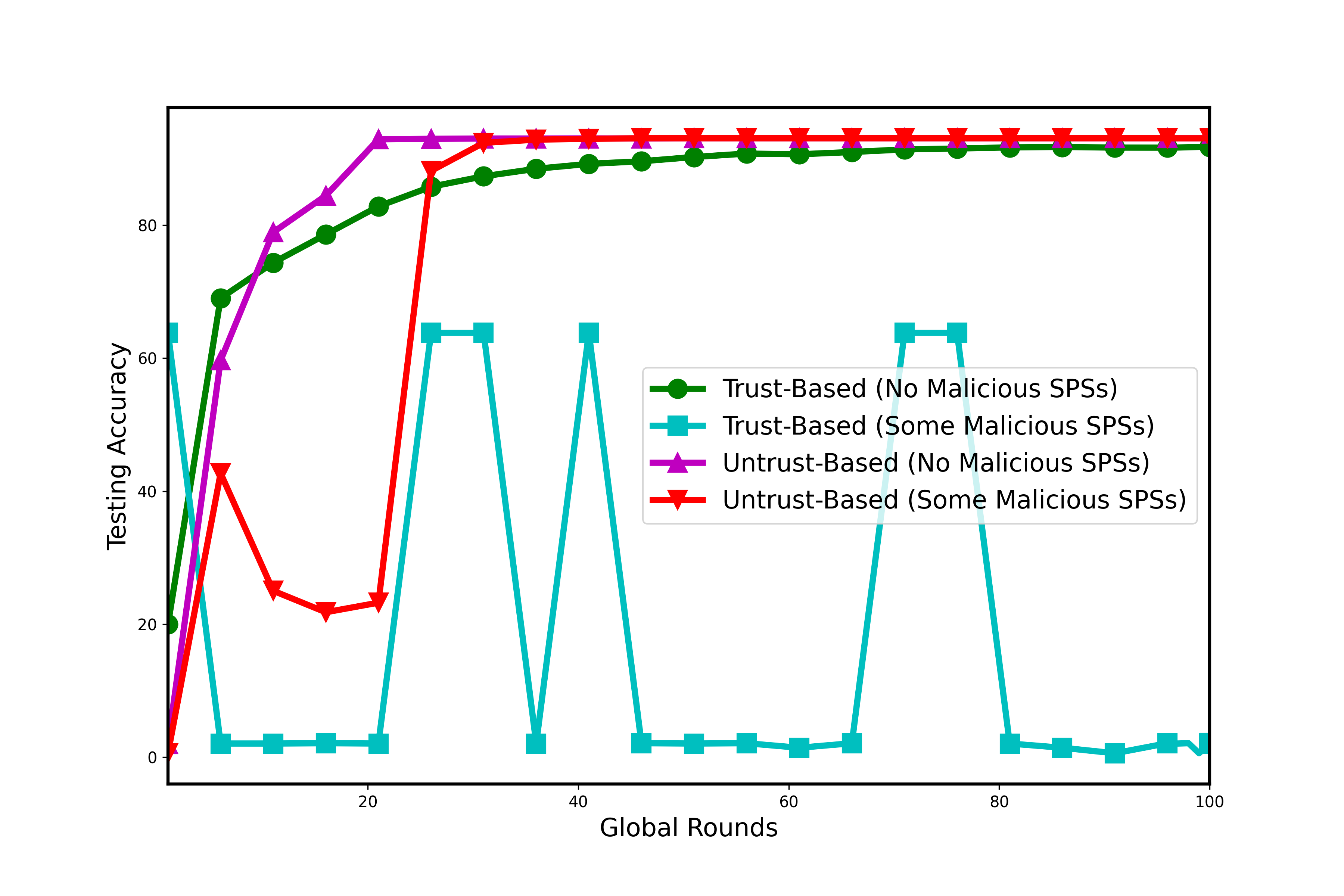}
    \caption{100}
    \label{fig:IEC_accu_noniid_100}
     \end{subfigure}
         \caption{{Test Accuracy Comparison of Proposed Schemes in the Presence of Malicious SPSs Claiming High-Quality Data While Uploading Defective Models (non-I.I.D. Data Distribution, IEC 104 dataset).}}
 % \vspace*{-0.5cm}
  \end{figure*}
\subsubsection{{The IID Setting}}
{
We consider three scenarios featuring varying numbers of substations: 20, 50, and 100. \Cref{fig:IEC_accu_iid_20,fig:IEC_accu_iid_50,fig:IEC_accu_iid_100} display the performance of our framework, which includes both trust-based and untrust-based schemes when the data is IID (i.e., all clients have the same data distribution). Our results clearly demonstrate the superiority of the enhanced FedAvg, which achieves performance levels nearly identical to centralized models. This can be attributed to our focus on local data quality for model weight assignment as opposed to merely considering data volume, as in traditional FedAvg. This advantage holds even when malicious participants are involved. Performance significantly drops and fluctuates when only the trust-based scheme is employed. This drop is due to the exclusive reliance on local data quality. However, our untrust-based averaging scheme effectively mitigates this by leveraging generalization test data to allocate weights, neutralizing the impact of corrupted local models.}
\subsubsection{{The Non-IID Setting}}
{In this setting, we assume that each SPS is limited to specific types of IEC 104 traffic patterns. \Cref{fig:IEC_accu_noniid_5,fig:IEC_accu_noniid_50,fig:IEC_accu_noniid_100} show that our proposed untrust-based averaging scheme performs exceptionally well, even in the presence of malicious activity. Trust-based model averaging struggles in this context, experiencing a significant decline in accuracy due to its reliance on claimed data quality.
Furthermore, \Cref{tab:IEC_TPR-TNR} reveals that our approaches outperform traditional FedAvg in critical security metrics such as TPRate and TNR, especially when adversarial participants are present. The trust-based model sees a decline in these metrics due to its dependence on claimed data quality. In contrast, our untrust-based model averaging maintains robust performance, demonstrating its value in real-world scenarios where malicious activity is a concern. Notably, our proposed approaches exhibit superior performance in non-IID settings. This is due to our emphasis on evaluating the quality of returned contributions rather than just data quantity, leading to a more generalizable global model.
In summary, our numerical evaluation of the IEC 104 dataset validates the effectiveness of our proposed methodologies, particularly in non-i.i.d settings and when malicious participants are involved. The insights gained from this analysis will be crucial for refining and optimizing our models for intrusion detection in ICS.}

%%%%%%%%%%%%%%%%%%%%%%%%%%%%%%%%%%%%%%%%%%%%%%%%%%%%%%%%%%%%%%%%%%%%%%%%%%%
 \begin{figure*}[htbp]
    \centering
       \begin{subfigure}[b]{0.32\linewidth}
    \includegraphics[width=\linewidth]{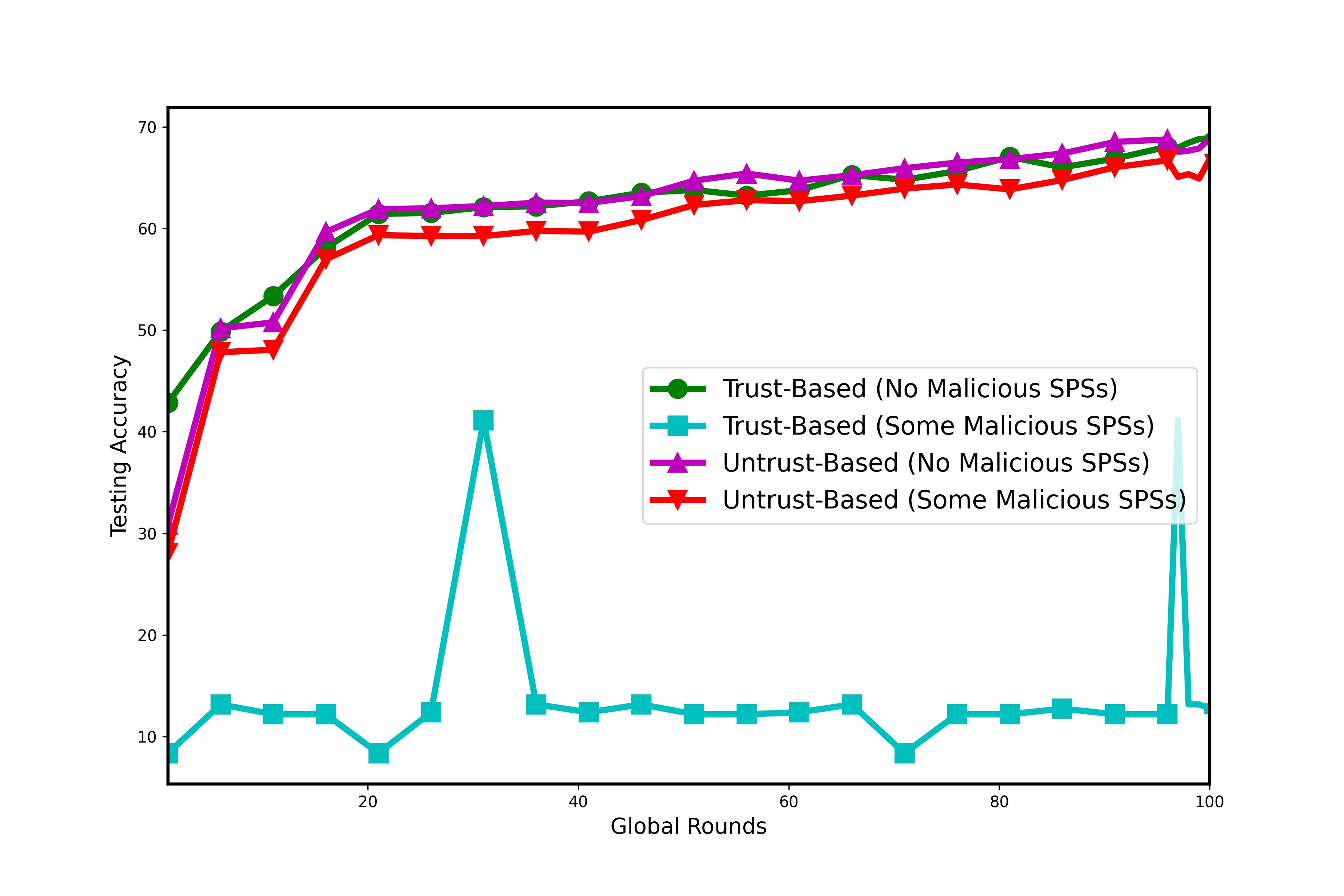}
    \caption{20}
    \label{fig:MMS_accu_noniid_5}
     \end{subfigure}
            \begin{subfigure}[b]{0.32\linewidth}
    \includegraphics[width=\linewidth]{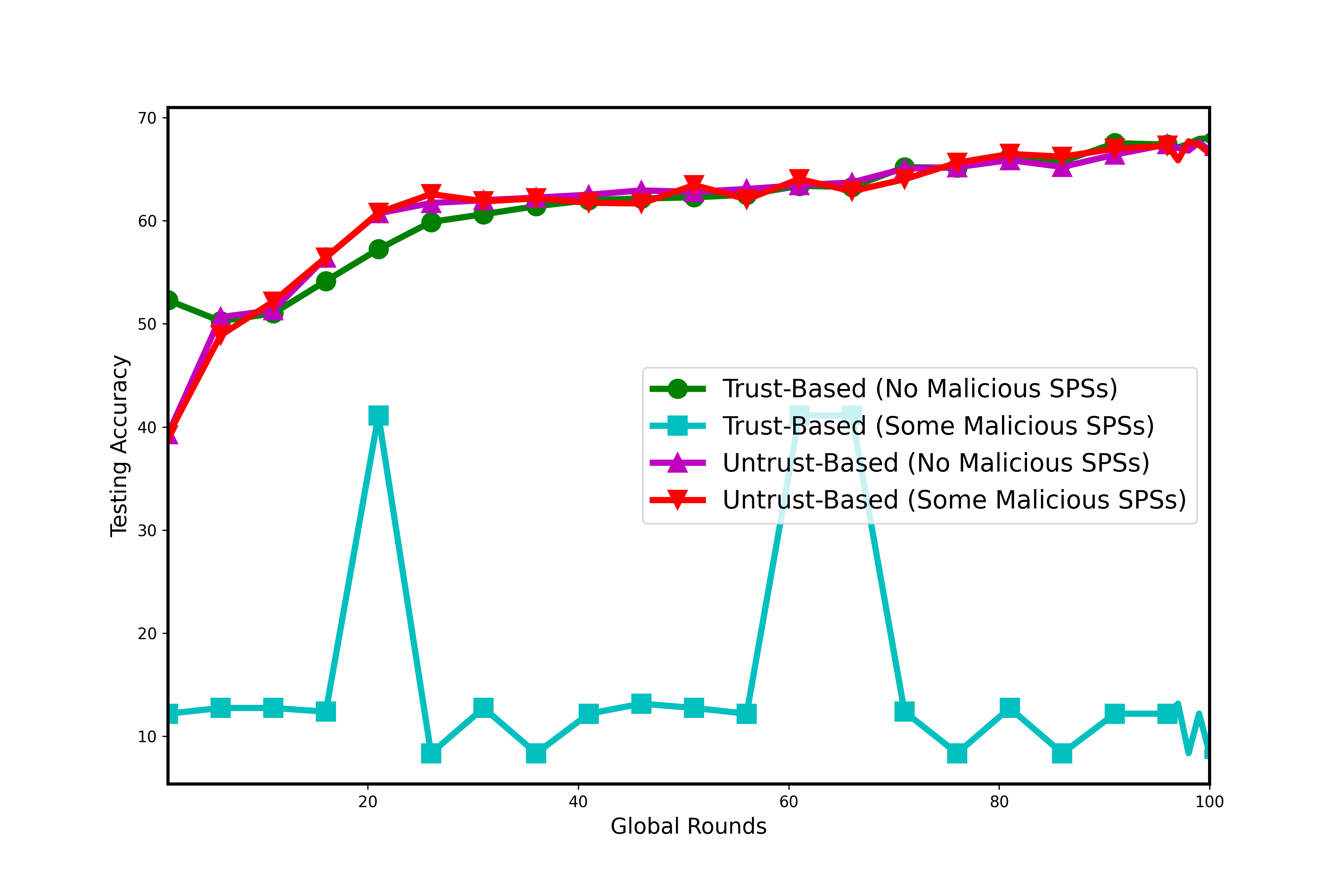}
    \caption{50}
    \label{fig:MMS_accu_noniid_50}
     \end{subfigure}
    \begin{subfigure}[b]{0.32\linewidth}
    \includegraphics[width=\linewidth]{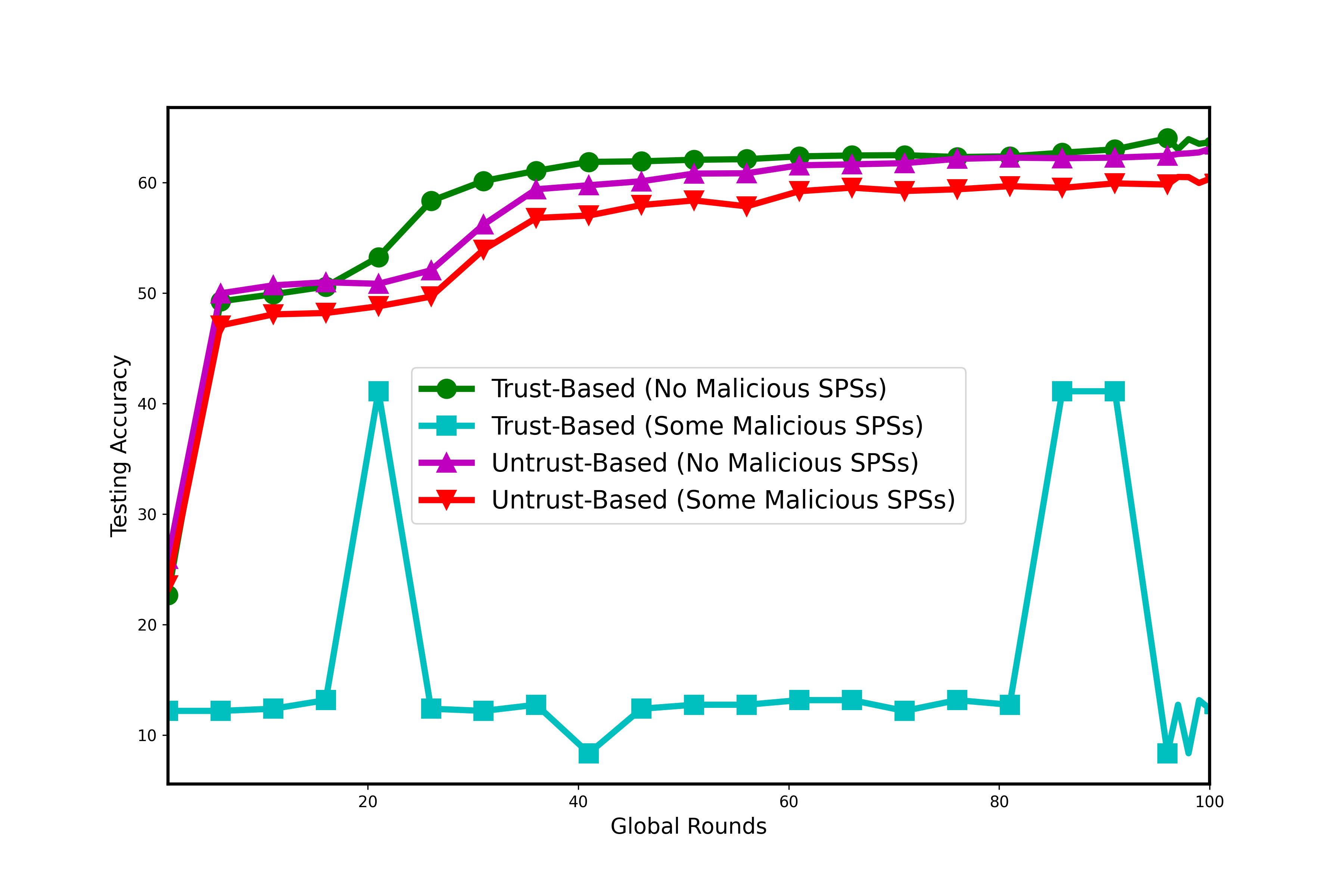}
    \caption{100}
    \label{fig:MMS_accu_noniid_100}
     \end{subfigure}
         \caption{{Test Accuracy Comparison of Proposed Schemes in the Presence of Malicious SPSs Claiming High-Quality Data While Uploading Defective Models (non-I.I.D. Data Distribution, IEC 61850 (MMS) dataset).}}
 % \vspace*{-0.5cm}
  \end{figure*}
\subsection{{Verfication with  IEC 61850 Manufacturing Message Specification (MMS) dataset}}
{To further verify our results, we carry out experiments using the IEC 61850 (MMS) dataset. IEC 61850 (MMS) is extensively employed in electric utility companies, predominantly for substation automation and ensuring interoperability between different manufacturers' systems. This global standard is important for streamlining the communication infrastructure of electrical substations and facilitating the integration, operation, and maintenance of diverse devices and systems in SG. Its usage spans various applications, including real-time monitoring, control of substation components, and ensuring robust, reliable data exchange. The dataset used includes both benign and malicious data samples. Remarkably, the performance trends observed with this dataset closely mirror those identified using the IEC 104 protocol data. Specifically, under non-IID settings as a most challenging scenario, our enhanced FedAvg framework, which includes both trust-based and untrust-based schemes, continued to demonstrate superior performance, closely approximating that of centralized models. This consistency highlights the robustness of our approach, particularly our novel method of weighting model updates based on the quality of local data rather than its volume.  Including the IEC 61850 dataset not only verifies our initial findings but also broadens the applicability of our framework. It shows that our untrust-based averaging scheme, which assesses the quality of contributions through generalization on test data, can effectively neutralize the influence of corrupted local models, ensuring stable and reliable model performance even in the face of malicious activity. This result demonstrates the generalizability and robustness of our approach across different SG communication protocols, further validating the effectiveness of our method in enhancing security and reliability in real-world, heterogeneous SG network environments.}

\section{Conclusion}
\label{sec:conclusion}
In this paper, we introduced FedPot, a novel quality assurance honeypot-based FL framework designed for network security in SG. FedPot incorporates {a} novel, efficient, and resilient aggregation and averaging schemes coupled with a fair rewards mechanism.  We presented novel schemes for local data quality, participant selection, and global model upgrading using the N-BaIoT, IEC 104, and IEC MMS datasets. In FedPot, the TPR addresses a convex optimization problem, prioritizing data quality over data size. Each SPS optimizes the global model with its honeypot logs and transmits the model updates back to the TPR. Subsequently, the TPR enhances the defensive model using the approaches proposed in this study.
To mitigate the free-rider issue prevalent in AMI networks within the FL framework, we proposed a new metric to gauge local data quality and contributions, {eliminating} the need to rely on data size. We also devised a two-step verification process to address the challenge of adversaries or underperforming SPSs. {Additionally}, we introduced an improved FedAvg scheme for local model aggregations.
The results obtained from extensive simulations with realistic log data attest to the effectiveness of our proposed scheme, which outperforms current state-of-the-art techniques. {As a direction} for future {research}, investigating the real-time implementation and assessment of FedPot in streaming, diverse, and larger-scale environments would be insightful. Lastly, adapting the FedPot framework to address other cybersecurity threats within various IIoT applications could also be advantageous.

\section*{Acknowledgement}
This publication was made possible by NPRP Cluster project (NPRP-C) Twelve (12th) Cycle grant \# NPRP12C-33905-SP-67 from the Qatar National Research Fund (a member of Qatar Foundation). The findings herein reflect the work, and are solely the responsibility, of the authors.
\bibliographystyle{IEEEtran}
\bibliography{ref}

\begin{IEEEbiography} [{\includegraphics[width=1in,height=1.25in,clip,keepaspectratio]{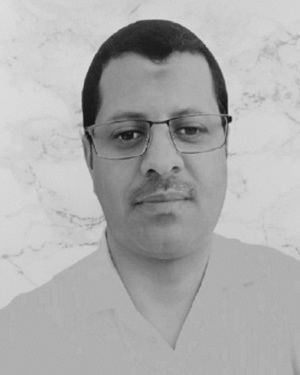}}] { \rmfamily  Abdullatif Albaseer (Member, IEEE)} \rmfamily \mdseries received an M.Sc. degree in computer networks from King Fahd University of Petroleum and Minerals, Dhahran, Saudi Arabia, in 2017 and a Ph.D. degree in computer science and engineering from Hamad Bin Khalifa University, Doha, Qatar, in 2022. He is a Postdoctoral Research Fellow with the Smart Cities and IoT Lab at Hamad Bin Khalifa University. He has authored and co-authored over thirty conference and journal papers in IEEE ICC, IEEE Globecom, IEEE CCNC, IEEE WCNC, and IEEE Transactions. He also has six US patents in the area of the wireless network edge. His current research interests include AI for Networking, AI for Cybersecurity, Distributed AI, and Edge LLMs.
\end{IEEEbiography}

\begin{IEEEbiography} [{\includegraphics[width=1in,height=1.25in,clip,keepaspectratio]{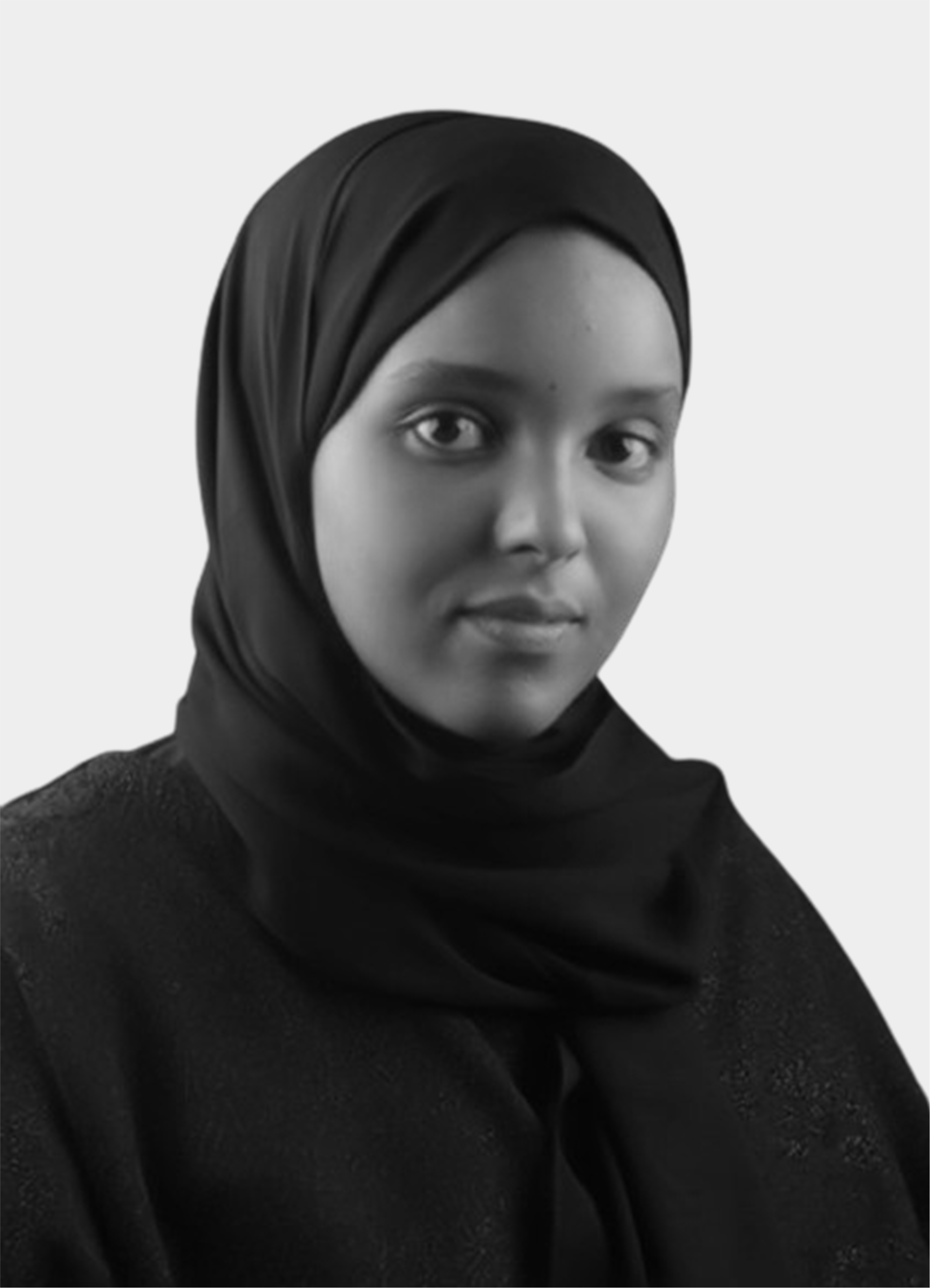}}]  { \rmfamily  Nima Abdi}
\rmfamily \mdseries received her B.Sc. in Electrical Engineering from Qatar University in 2020 and is currently pursuing an M.Sc. in Data Science and Engineering at Hamad Bin Khalifa University (HBKU). Her research focus is on the application of Artificial Intelligence on smart grid security, specifically the physical layer.
\end{IEEEbiography}
\begin{IEEEbiography} [{\includegraphics[width=1in,height=1.25in,clip,keepaspectratio]{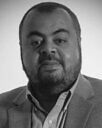}}]
{ \rmfamily  Mohamed Abdallah (Senior Member, IEEE)} \rmfamily \mdseries
received his B.Sc. degree from Cairo University, Giza, Egypt, in 1996, and his M.Sc. and Ph.D. degrees from the University of Maryland at College Park, College Park, MD, USA, in 2001 and 2006, respectively.,From 2006 to 2016, he held academic and research positions with Cairo University and Texas A \& M University in Qatar, Doha, Qatar. He is currently a Founding Faculty Member with the rank of Associate Professor with the College of Science and Engineering, Hamad Bin Khalifa University, Doha. He has published more than 150 journals and conferences and four book chapters and co-invented four patents. His current research interests include wireless networks, wireless security, smart grids, optical wireless communication, and blockchain applications for emerging networks. He is a recipient of the Research Fellow Excellence Award at Texas A\& M University in Qatar in 2016, the Best Paper Award in multiple IEEE conferences, including IEEE BlackSeaCom 2019 and the IEEE First Workshop on Smart Grid and Renewable Energy in 2015, and the Nortel Networks Industrial Fellowship for five consecutive years, 1999–2003. His professional activities include an Associate Editor of the IEEE Transactions on Communications and the IEEE Open Access Journal of Communications, the Track Co-Chair of the IEEE VTC Fall 2019 Conference, the Technical Program Chair of the 10th International Conference on Cognitive Radio-Oriented Wireless Networks, and a technical program committee member of several major IEEE conferences.
\end{IEEEbiography}

\begin{IEEEbiography} [{\includegraphics[width=1in,height=1.25in,clip,keepaspectratio]{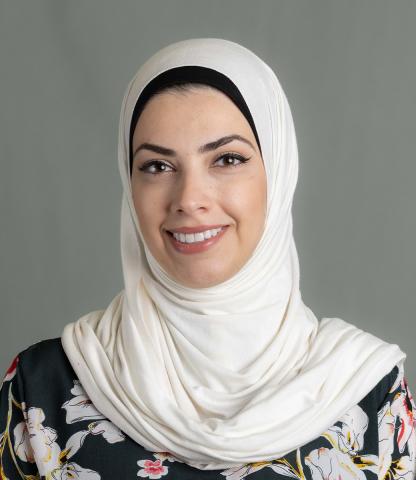}}]
{ \rmfamily  Marwa Qaraqe (Senior Member, IEEE)} \rmfamily \mdseries
 is an Associate Professor within the Division of Information and Communication Technology at Hamad Bin Khalifa University's College of Science and Engineering. She completed her bachelor’s degree in Electrical Engineering at Texas A\&M University in Qatar in 2010, followed by her MSc and PhD in Electrical Engineering at Texas A\&M University in College Station, TX, USA, in August 2012 and May 2016, respectively. Dr. Qaraqe's research focuses on various aspects of wireless communication, signal processing, and machine learning, with applications spanning multidisciplinary areas such as security, IoT, and health. Her specific interests lie in physical layer security, federated learning across wireless networks, and employing machine learning techniques for enhancing wireless communication, security, and healthcare systems. She has been actively involved in developing physical layer security protocols for IoT networks and has secured a NATO SPS grant for her work in this domain. Additionally, Dr. Qaraqe is engaged in research exploring emerging technologies like RIS (Reconfigurable Intelligent Surfaces) and reinforcement learning to advance the capabilities of wireless communication, particularly in the context of enabling efficient and highly secure communication infrastructures for smart cities.
\end{IEEEbiography}

\begin{IEEEbiography} [{\includegraphics[width=1in,height=1.25in,clip,keepaspectratio]{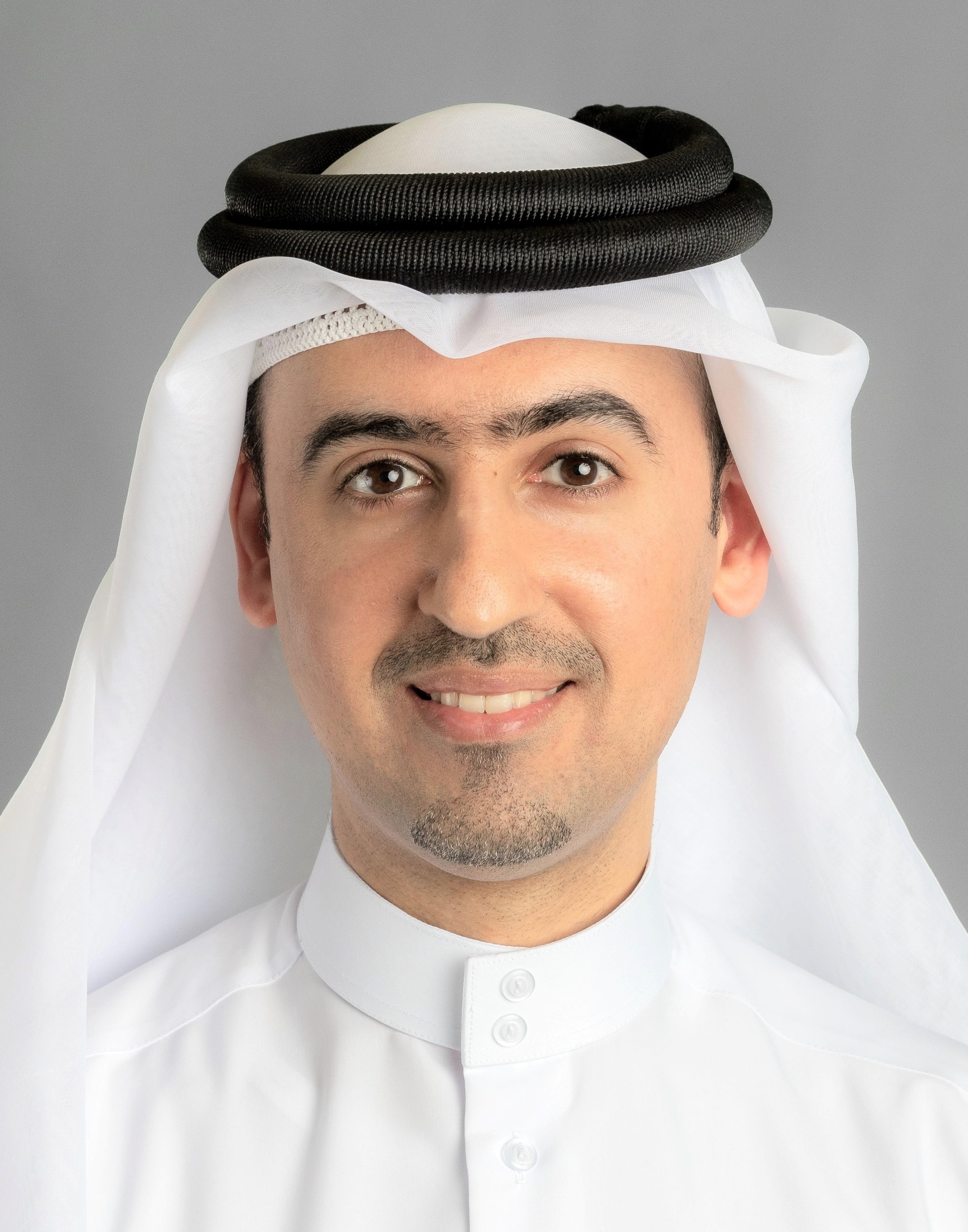}}]
{ \rmfamily  Saif Al-Kuwari (Senior Member, IEEE)} \rmfamily \mdseries
 received a Bachelor of Engineering in Computers and Networks from the University of Essex (UK) in 2006 and two PhD’s from the University of Bath and Royal Holloway, University of London (UK) in Computer Science, both in 2011. He is currently a faculty at the College of Science and Engineering at Hamad Bin Khalifa University and the director of the Qatar Center for Quantum Computing (QC2). His current research interests include, mainly, quantum cryptography and quantum machine learning. He is IET and BCS fellow, and IEEE and ACM senior member.
 
\end{IEEEbiography}
\end{document}